\PassOptionsToPackage{expansion=false,protrusion=true}{microtype}

\documentclass[sigconf, nonacm]{acmart}

\acmConference{}{}{}
\acmYear{}
\copyrightyear{}
\setcopyright{none}

\usepackage{amsmath}
\usepackage{algorithm}
\usepackage{algpseudocode}
\usepackage{booktabs}
\usepackage{subcaption}
\usepackage{xspace}
\usepackage{listings}
\usepackage{tikz}
\usepackage{pgfplots}
\usepackage{placeins}  
\usepackage{multirow}
\usetikzlibrary{arrows.meta, positioning, calc, patterns, shapes.geometric, decorations.pathreplacing, shadows}

\definecolor{surgePrimary}{HTML}{0077BB}
\definecolor{surgeLight}{HTML}{88CCEE}
\definecolor{baselineRed}{HTML}{CC3311}
\definecolor{baselineLight}{HTML}{EE6677}
\definecolor{compEncode}{HTML}{EE7733}
\definecolor{compSerialize}{HTML}{009988}
\definecolor{compUpload}{HTML}{AA3377}
\definecolor{compOverhead}{HTML}{BBBBBB}
\definecolor{accentGold}{HTML}{DDAA33}
\definecolor{archInput}{HTML}{DCEEF8}
\definecolor{archCore}{HTML}{FDECD0}
\definecolor{archOutput}{HTML}{E0F2E9}
\definecolor{archIO}{HTML}{F0E0F0}

\pgfplotsset{
  compat=1.18,
  every axis/.append style={
    font=\small,
    tick label style={font=\scriptsize},
    label style={font=\small},
    legend style={
      font=\scriptsize,
      draw=gray!40,
      fill=white,
      fill opacity=0.85,
      text opacity=1,
      rounded corners=1pt,
      inner sep=2pt,
      outer sep=1pt,
    },
    grid=major,
    grid style={gray!15, very thin},
    major tick length=2pt,
  },
  every axis plot/.append style={thick},
  surge bar/.style={
    ybar,
    bar width=8pt,
    enlarge x limits=0.12,
    nodes near coords style={font=\tiny, /pgf/number format/1000 sep={\,}},
    scaled y ticks=false,
    y tick label style={/pgf/number format/1000 sep={\,}},
  },
  surge line/.style={
    mark size=2.5pt,
    scaled y ticks=false,
    y tick label style={/pgf/number format/1000 sep={\,}},
  },
}

\newcommand{\surge}{\textsc{Surge}\xspace}
\newcommand{\bmin}{B_{\min}}
\newcommand{\bmax}{B_{\max}}
\newcommand{\nmax}{n_{\max}}
\newcommand{\cipc}{c_{\text{ipc}}}
\newcommand{\cenc}{c_{\text{enc}}}

\newtheorem{theorem}{Theorem}
\newtheorem{lemma}[theorem]{Lemma}
\newtheorem{corollary}[theorem]{Corollary}

\begin{document}

\title{SURGE: SuperBatch Unified Resource-efficient GPU Encoding\\for Heterogeneous Partitioned Data}

\author{Shashank Kapadia}
\affiliation{%
  \institution{Walmart Inc.}
  \country{USA}
}
\email{shashank.kapadia@walmart.com}

\author{Deep Narayan Mishra}
\affiliation{%
  \institution{Walmart Inc.}
  \country{USA}
}
\email{deep.mishra@walmart.com}

\author{Sujal Reddy Alugubelli}
\affiliation{%
  \institution{Walmart Inc.}
  \country{USA}
}
\email{SujalReddy.Alugubelli@samsclub.com}

\author{Ajay Kumar}
\affiliation{%
  \institution{Walmart Inc.}
  \country{USA}
}
\email{Ajay.Kumar10@walmart.com}

\author{Swapnil Yadav}
\affiliation{%
  \institution{Walmart Inc.}
  \country{USA}
}
\email{Swapnil.Yadav@walmart.com}

\author{Rishi Bhatia}
\affiliation{%
  \institution{Walmart Inc.}
  \country{USA}
}
\email{Rishi.Bhatia@walmart.com}

\renewcommand{\shortauthors}{Kapadia et al.}

\begin{abstract}
We present \textbf{SURGE}, a streaming GPU encoding system deployed in production to generate embeddings for over 800 million texts across 40{,}000 logical partitions. Production embedding pipelines face a tension between logical data partitioning and efficient GPU utilization: processing each partition independently incurs $P$ inter-process communication (IPC) calls whose overhead limits throughput for compute-light models. Our central contributions are analytical: (i)~a cost model (Theorem~\ref{thm:ipc}) that quantifies when IPC amortization matters as a function of partition distribution and model compute intensity, predicting throughput within 2\% of measured results across three encoders spanning a $15\times$ parameter range; (ii)~a memory-safety bound (Lemma~\ref{lem:memory}) that enables a streaming two-threshold aggregation policy with peak memory $O(\bmin + \nmax)$ rather than $O(N)$; and (iii)~a $\phi/\text{CV}$ decision framework (\S\ref{sec:generalizability}) that characterizes when the pattern applies beyond our workload. The naive fix of ignoring partitions and batching at fixed size reaches the IPC ceiling but requires $O(N)$ peak memory (32.7\,GB at 10M texts; infeasible beyond ${\sim}$60M on 192\,GB nodes), produces no output until all encoding completes, and offers no fault tolerance. SURGE achieves the same throughput with $O(\bmin + \nmax)$ bounded memory (2.6\,GB), $68\times$ faster time-to-first-output through incremental partition flushing, and crash recovery at SuperBatch granularity. On 10M texts with 4 NVIDIA L4 GPUs, SURGE delivers 26{,}413\,texts/s---matching fixed-batch throughput while using $12.6\times$ less memory. We validate the cost model on bge-base (109M, $d{=}768$, error 1.3\%) and across log-normal $\sigma \in \{1.0, 1.72, 2.5\}$ (speedup invariant within $\pm$3\%), and compare against a partition-batched baseline with offline columnar load balancing (PB-PBP-LB), against which SURGE retains a $7\%$ throughput edge, $2.5\times$ faster TTFO, and an unconditional memory guarantee. Complementary engineering---zero-copy Arrow serialization ($22$--$25\times$ over naive construction) and asynchronous I/O pipelining (up to $93\%$ benefit at high storage latency)---realizes the design but is not the contribution.
\end{abstract}

\keywords{GPU encoding, batch processing, data partitioning, embedding generation, streaming systems, inter-process communication}

\maketitle

\section{Introduction}\label{sec:intro}

Dense text embeddings are a foundational building block for information retrieval~\cite{karpukhin2020dense}, recommendation systems~\cite{naumov2019dlrm}, and semantic search~\cite{devlin2019bert, reimers2019sentence}. A modern product catalog may contain hundreds of millions of text descriptions that must be encoded into fixed-dimensional vectors, stored in partitioned columnar formats, and refreshed regularly. The computational cost of this encoding step is dominated by GPU inference, and the engineering challenge lies in maximizing throughput across a workload whose inherent structure---logical partitions defined by domain taxonomy---conflicts with the batch-size requirements of efficient GPU execution.

The central insight of this paper is that \textbf{inter-process communication (IPC) overhead is a substantial throughput bottleneck} for production embedding pipelines using compute-light models. Multi-GPU encoding frameworks such as Sentence-Transformers~\cite{reimers2020multilingual} distribute work across processes via data serialization, transfer, and result gathering. Each invocation incurs a fixed IPC cost $\cipc$ regardless of payload size. When a pipeline processes $P = 4{,}000$ partitions independently, it makes 4{,}000 encoding calls; the aggregate IPC cost accounts for nearly half (48\%) of total wall-clock time in our evaluated workload (\S\ref{sec:eval}).

We formalize this bottleneck through a cost model (Theorem~\ref{thm:ipc}) and show that \emph{any} batching strategy reducing encoding invocations from $P$ to $O(N/B)$ achieves the same throughput ceiling. Current approaches fall into two categories:
\begin{itemize}
    \item \textbf{Partition-by-partition (PBP)} processing encodes each partition independently, incurring $P$ IPC calls. Throughput degrades proportionally to partition count.
    \item \textbf{Fixed-size batching (FSB)} ignores partition boundaries, encoding texts in chunks of predetermined size. This reduces IPC calls to $\lceil N / B \rceil$ and achieves the throughput ceiling, but requires $O(N)$ peak memory (32.7\,GB at 10M texts; Table~\ref{tab:main-results}) for an $O(N \log N)$ regrouping pass, produces zero output until all encoding completes, and loses all progress on failure.
\end{itemize}

\newpage
\noindent\textbf{Contributions.} The paper's primary contributions are analytical, not algorithmic:
\begin{enumerate}
    \item \textbf{Cost model (Theorem~\ref{thm:ipc}).} A first-principles decomposition of wall-clock time into IPC and compute components that predicts throughput for any batching strategy within $2\%$ of measured values, validated on three encoders (MiniLM 22M, bge-base 109M, E5-large 335M) and across log-normal $\sigma \in \{1.0, 1.72, 2.5\}$.
    \item \textbf{Memory-safety bound (Lemma~\ref{lem:memory}) and the two-threshold streaming policy.} An unconditional $O(\bmin + \nmax)$ peak-memory bound where $\nmax$ is the largest single-partition size, enabling a streaming aggregator with a lower efficiency threshold $\bmin$ and an upper safety trigger $\bmax$. This bound holds for adversarial partition arrival orders---the property that distinguishes SURGE from offline-sort baselines (\S\ref{sec:pbpbp-comparison}).
    \item \textbf{$\phi/\text{CV}$ decision framework (\S\ref{sec:generalizability}).} A two-parameter rule using the IPC-dominated fraction $\phi$ and the partition-size coefficient of variation that characterizes when the SURGE pattern is beneficial, applicable, or unnecessary---generalizing beyond our workload to multilingual corpora, geo-partitioned datasets, and any taxonomy-organized catalog.
\end{enumerate}

\noindent\textbf{System realization.} We embody these contributions in \surge, a streaming aggregation system. Three complementary engineering techniques realize the design but are not the contribution: (a)~a SuperBatch aggregator that accumulates completed partitions and flushes when a running text count crosses $\bmin$, with partition boundaries enabling zero-overhead slicing of the resulting embedding matrix (\S\ref{sec:aggregator}); (b)~zero-copy Arrow serialization that converts embedding matrices directly into FixedSizeListArrays, eliminating $O(Nd)$ intermediate Python objects (\S\ref{sec:zerocopy}); (c)~an asynchronous I/O pipeline that overlaps serialization and storage writes with GPU computation (\S\ref{sec:async}). Each is well-established in isolation; the contribution is their principled combination under the cost model and memory bound, and the validated deployment of the resulting system in production for $800$M$+$ texts.

\begin{figure}[t]
\centering
\begin{tikzpicture}
\begin{axis}[
    width=0.52\columnwidth, height=0.45\columnwidth,
    xlabel={$N$ (millions)}, ylabel={Memory (GB)},
    xmin=0, xmax=55, ymin=0, ymax=210,
    xtick={10,25,50}, ytick={0,100,200},
    tick label style={font=\scriptsize},
    label style={font=\scriptsize},
    legend style={at={(0.03,0.97)}, anchor=north west, font=\scriptsize, draw=gray!50},
    mark size=2pt,
    line width=1pt,
]
\addplot[mark=square*, baselineRed, thick] coordinates {(1,4.4)(10,32.8)(25,81.7)(50,162.8)};
\addlegendentry{FB-100K}
\addplot[mark=*, surgePrimary, thick] coordinates {(1,1.5)(10,2.6)(25,5.7)(50,8.8)};
\addlegendentry{SURGE}
\addplot[mark=none, black, densely dotted, thick] coordinates {(0,192)(55,192)};
\node[font=\tiny, anchor=south west] at (axis cs:2,195) {192\,GB*};
\end{axis}
\end{tikzpicture}%
\hspace{2pt}%
\begin{tikzpicture}
\begin{axis}[
    width=0.52\columnwidth, height=0.45\columnwidth,
    xlabel={$N$ (millions)}, ylabel={TTFO (s)},
    xmin=0, xmax=55, ymin=0, ymax=1400,
    xtick={10,25,50}, ytick={0,500,1000},
    tick label style={font=\scriptsize},
    label style={font=\scriptsize},
    legend style={at={(0.03,0.97)}, anchor=north west, font=\scriptsize, draw=gray!50},
    mark size=2pt,
    line width=1pt,
]
\addplot[mark=square*, baselineRed, thick] coordinates {(1,25)(10,247.3)(25,619.3)(50,1211.8)};
\addlegendentry{FB-100K}
\addplot[mark=*, surgePrimary, thick] coordinates {(1,4.5)(10,3.6)(25,3.6)(50,3.6)};
\addlegendentry{SURGE}
\end{axis}
\end{tikzpicture}
\caption{SURGE's key advantage: bounded $O(\bmin + \nmax)$ memory and $O(1)$ time-to-first-output (TTFO) vs.\ fixed-batch's $O(N)$ scaling. At 50M texts, SURGE uses $18.5\times$ less memory and produces output $337\times$ faster. SURGE's TTFO decreases from 4.5\,s at 1M to 3.6\,s at 10M+ because model warmup and process initialization are amortized over larger first SuperBatches. Full analysis in \S\ref{sec:scaling}.}
\label{fig:teaser}
\end{figure}

\section{Background and Motivation}\label{sec:background}

\subsection{The Heterogeneous Partition Problem}\label{sec:partition-problem}

In production embedding pipelines, data is logically partitioned by domain taxonomies (product categories, language codes, geographic regions). These partitions exhibit heavy-tailed size distributions: a small number of large partitions contain a disproportionate share of texts, while the majority are small. In our workload, partition sizes follow a log-normal distribution ($\mu = 9.03$, $\sigma = 1.72$ in log-space), with sizes ranging from 187 to 447{,}231 texts and a median of 8{,}412. Figure~\ref{fig:partition-dist} illustrates this distribution.

\begin{figure}[t]
\centering
\begin{tikzpicture}
\begin{axis}[
    width=\columnwidth,
    height=0.618\columnwidth,  
    xlabel={Partition size (texts, log scale)},
    ylabel={Count},
    xmode=log,
    ybar,
    bar width=8pt,
    ymin=0,
    xtick={100, 1000, 10000, 100000},
    xticklabels={$10^2$, $10^3$, $10^4$, $10^5$},
    legend style={at={(0.97,0.97)}, anchor=north east, font=\scriptsize},
    title style={font=\small},
]
\fill[baselineRed!6] (axis cs:100,0) rectangle (axis cs:2340,860);
\node[font=\tiny, text=baselineRed!70!black, rotate=90, anchor=south]
    at (axis cs:500, 430) {IPC-dominated};
\addplot[fill=surgePrimary!50, draw=surgePrimary!80] coordinates {
    (200, 120) (400, 280) (800, 520) (1500, 710) (3000, 850)
    (6000, 620) (12000, 400) (25000, 230) (50000, 120)
    (100000, 80) (200000, 50) (400000, 20)
};
\draw[accentGold, dashed, line width=1.2pt] (axis cs:2340,0) -- (axis cs:2340,920);
\node[font=\scriptsize, text=accentGold!80!black, anchor=south] at (axis cs:2340,930) {$n^*$};
\end{axis}
\end{tikzpicture}
\caption{Partition size distribution (log-normal, $\mu{=}9.03$, $\sigma{=}1.72$). The dashed line marks $n^*{=}2{,}340$, the IPC-dominated threshold: 23\% of partitions ($\phi{=}0.23$) fall below $n^*$. However, the aggregate IPC cost across all $P{=}4{,}000$ calls accounts for 48\% of PBP wall time.}
\label{fig:partition-dist}
\end{figure}

\subsection{Baseline Approaches: PBP and FSB}\label{sec:baselines-formal}

We formally define the two baseline strategies before introducing the cost model. Let the input be a set of partitions $\{(k_1, T_1), \ldots, (k_P, T_P)\}$ where $T_i$ is a sequence of $n_i$ texts and $\sum_i n_i = N$. The output requirement is a per-partition mapping from $k_i$ to its embedding matrix $\mathbf{E}_i \in \mathbb{R}^{n_i \times d}$.

\textbf{Partition-by-partition (PBP).} For each $i \in [P]$ in arrival order, invoke the multi-GPU encoder once on $T_i$ to obtain $\mathbf{E}_i$, then write $(k_i, \mathbf{E}_i)$ to storage. PBP makes exactly $P$ encode calls and produces output incrementally; its peak memory is $O(\nmax \cdot d)$ where $\nmax = \max_i n_i$.

\textbf{Fixed-size batching (FSB).} Concatenate all texts $T = T_1 \,\Vert\, \ldots \,\Vert\, T_P$ together with a parallel array of partition labels. Encode $T$ in chunks of fixed size $B$, yielding $\lceil N/B \rceil$ encode calls. After encoding completes, regroup the output rows by partition label (an $O(N \log N)$ argsort pass) to produce per-partition matrices. FSB achieves the IPC-amortized throughput ceiling but requires $O(N)$ peak memory to hold the full embedding matrix prior to regrouping, and emits zero output until the final regrouping step.

The two strategies bracket the design space: PBP minimizes memory and TTFO but maximizes IPC overhead; FSB amortizes IPC but sacrifices memory bound and streaming output. SURGE (\S\ref{sec:design}) achieves the FSB throughput ceiling with PBP's deployability properties.

\subsection{Multi-GPU Encoding Cost Model}\label{sec:cost-model}

Let $G$ denote the number of GPUs. A single call to a multi-GPU encoding function (e.g., \texttt{encode\_multi\_process}~\cite{reimers2020multilingual}) incurs:
\begin{itemize}
    \item A \emph{fixed IPC overhead} $\cipc$: process pool dispatch, data serialization to worker processes, result gathering, and deserialization. This cost is approximately constant regardless of payload size.
    \item A \emph{per-text cost} $\cenc / G$: tokenization, GPU forward pass, and result transfer, divided across $G$ workers.
\end{itemize}

\noindent The wall-clock time for processing a partition of $n_k$ texts independently is:
\begin{equation}\label{eq:partition-time}
T_k = \cipc + \frac{n_k \cdot \cenc}{G}
\end{equation}

\noindent The per-partition throughput is $\tau_k = n_k / T_k$. For small partitions where $n_k \cdot \cenc / G \ll \cipc$, throughput degrades to $\tau_k \approx n_k / \cipc$---proportional to partition size rather than GPU capacity. We define the \emph{IPC-dominated threshold}:
\begin{equation}\label{eq:nstar}
n^* = \frac{\cipc \cdot G}{\cenc}
\end{equation}

\noindent Partitions with $n_k < n^*$ spend more time on IPC than encoding. The \emph{IPC-dominated fraction} $\phi = |\{k : n_k < n^*\}| / P$ quantifies the workload's susceptibility to this overhead. In our workload, $n^* \approx 2{,}340$ and $\phi = 0.23$: while only 23\% of partitions are individually IPC-dominated, the \emph{modeled} aggregate IPC component $P \cdot \cipc = 4{,}000 \times 0.087\text{\,s} = 348$\,s accounts for 48\% of total PBP wall time (Equation~\ref{eq:partition-time} summed over all partitions). This is not a direct measurement residual but the cost-model prediction for the IPC portion of each encode call, which we validate against measured throughput in \S\ref{sec:e2e-results}. This aggregate overhead motivates batching across partition boundaries.

A natural alternative is eliminating IPC entirely via shared-memory multi-GPU approaches (e.g., \texttt{torch.nn.DataParallel}, thread-based workers). However, Python's Global Interpreter Lock (GIL) prevents true parallel tokenization across threads, and shared-memory GPU access from multiple threads requires careful synchronization. Process-based isolation provides fault containment (a GPU out-of-memory error in one worker does not crash the pipeline), independent CUDA context management, and clean memory accounting---properties essential for 6+ hour production runs processing 800M+ texts.

SURGE's amortization approach is complementary: it reduces IPC calls from $O(P)$ to $O(N/\bmin)$ within the existing process-based architecture and provides additional benefits (memory bounding, streaming output, crash recovery) that persist even if IPC overhead were eliminated.

\subsection{Why GPU Utilization Is Low}\label{sec:gpu-util-motivation}

A natural question is whether low throughput implies low GPU utilization. The answer depends on the \emph{compute intensity} of the model. For a compute-light model such as MiniLM-L6-v2 (22M parameters)~\cite{wang2020minilm}, the GPU forward pass is fast relative to tokenization and data movement. Even within an encode call operating at full batch size, the GPU is idle during tokenization and CPU-GPU data transfer. GPU utilization measured by hardware counters (e.g., \texttt{nvidia-smi}) reflects kernel occupancy, not pipeline throughput---the observed ${\sim}10\%$ utilization under SURGE (Table~\ref{tab:main-results}) is a consequence of model size, not system inefficiency. We formalize this observation in \S\ref{sec:compute-intensity}.

\section{System Design}\label{sec:design}

\subsection{System Overview}

\surge consists of five components in a streaming pipeline: (1)~an ordered data source providing rows sorted by partition key, (2)~partition boundary detection via key-change monitoring, (3)~a SuperBatch aggregator that accumulates partitions for GPU-efficient encoding, (4)~zero-copy serialization to columnar format, and (5)~asynchronous storage upload. Figure~\ref{fig:architecture} illustrates the pipeline.

\begin{figure}[t]
\centering
\begin{tikzpicture}[
    node distance=0.15cm and 0.22cm,
    stage/.style={draw=gray!60, rounded corners=4pt, minimum height=0.72cm, minimum width=1.12cm,
                  font=\scriptsize, align=center, thick,
                  drop shadow={shadow xshift=0.4pt, shadow yshift=-0.4pt, opacity=0.12}},
    arr/.style={-{Stealth[length=4pt]}, thick, gray!70},
]
\node[stage, fill=archInput] (src) {Ordered\\Source};
\node[stage, fill=archInput, right=of src] (bnd) {Boundary\\Detect};
\node[stage, fill=archCore, right=of bnd, draw=compEncode!80, line width=1pt] (agg) {SuperBatch\\Aggregator};
\node[stage, fill=archOutput, right=of agg] (ser) {Zero-Copy\\Serialize};
\node[stage, fill=archIO, right=of ser] (upl) {Async\\Upload};

\draw[arr] (src) -- (bnd);
\draw[arr] (bnd) -- (agg);
\draw[arr] (agg) -- (ser);
\draw[arr] (ser) -- (upl);

\begin{scope}[yshift=-1.35cm, xshift=0.6cm]
  \node[font=\tiny\bfseries, anchor=east] at (-0.1, 0.12) {Time:};
  \fill[compEncode!40] (0, 0) rectangle (1.6, 0.28);
  \fill[compSerialize!40] (1.6, 0) rectangle (2.05, 0.28);
  \fill[compUpload!40] (2.05, 0) rectangle (2.85, 0.28);
  \node[font=\tiny, white] at (0.8, 0.14) {GPU $j$};
  \fill[compEncode!60] (1.6, -0.38) rectangle (3.2, -0.1);
  \node[font=\tiny, white] at (2.4, -0.24) {GPU $j{+}1$};
  \draw[decorate, decoration={brace, amplitude=2pt, mirror}, accentGold, thick]
      (1.6, -0.52) -- (2.85, -0.52) node[midway, below=2pt, font=\tiny, accentGold!80!black] {overlap};
  \node[font=\tiny, anchor=west] at (3.4, 0.0) {
    \tikz{\fill[compEncode!50] (0,0) rectangle (0.18,0.12);} Enc
    \tikz{\fill[compSerialize!40] (0,0) rectangle (0.18,0.12);} Ser
    \tikz{\fill[compUpload!40] (0,0) rectangle (0.18,0.12);} Upl
  };
\end{scope}
\end{tikzpicture}
\caption{\surge pipeline architecture. GPU encoding of SuperBatch $j{+}1$ overlaps with serialization and upload of SuperBatch $j$, eliminating I/O stalls. The mini-timeline shows how pipelining hides I/O latency.}
\label{fig:architecture}
\end{figure}

\subsection{SuperBatch Aggregator}\label{sec:aggregator}

The SuperBatchAggregator bridges the gap between logical partitions and physical GPU batches. Algorithm~\ref{alg:surge} presents the core logic; its peak resident state is bounded by $O(\bmin + \nmax)$ where $\nmax$ is the largest single-partition size in the input (Lemma~\ref{lem:memory}).

\begin{algorithm}[t]
\caption{SURGE SuperBatch Aggregation. Peak resident state $O(\bmin + \nmax)$.}
\label{alg:surge}
\begin{algorithmic}[1]
\Require Stream of $(key, text)$ pairs ordered by key; thresholds $\bmin$, $\bmax$; $G$ GPUs; model $f_\theta$
\Ensure Per-partition columnar files in remote storage
\State $\textit{partitions} \gets []$; $\textit{total} \gets 0$; $\textit{curKey} \gets \texttt{null}$; $\textit{curTexts} \gets []$
\For{each $(key, text)$ in stream}
    \If{$key \neq curKey$}
        \If{$curKey \neq \texttt{null}$}
            \State \Call{AddPartition}{$curKey$, $curTexts$}
        \EndIf
        \State $curKey \gets key$; $curTexts \gets []$
    \EndIf
    \State $curTexts.\text{append}(text)$
\EndFor
\State \Call{AddPartition}{$curKey$, $curTexts$}; \Call{Flush}{}
\Statex
\Procedure{AddPartition}{$key$, $texts$}
    \State $partitions.\text{append}((key, \text{copy}(texts)))$
    \State $total \gets total + |texts|$
    \If{$total \geq \bmax$} \Call{Flush}{} \Comment{Memory-safety trigger (rare)}
    \ElsIf{$total \geq \bmin$} \Call{Flush}{} \Comment{Efficiency trigger (common)}
    \EndIf
\EndProcedure
\Statex
\Procedure{Flush}{}
    \State $allTexts \gets []$; $bounds \gets []$; $idx \gets 0$
    \For{each $(key, texts)$ in $partitions$}
        \State $allTexts.\text{extend}(texts)$
        \State $bounds.\text{append}((idx, idx + |texts|, key))$
        \State $idx \gets idx + |texts|$
    \EndFor
    \State $\mathbf{E} \gets f_\theta.\text{encode\_multi\_process}(allTexts)$ \Comment{Single GPU call}
    \For{each $(start, end, key)$ in $bounds$}
        \State $\mathbf{E}_k \gets \mathbf{E}[start{:}end]$ \Comment{Zero-copy slice}
        \State $bytes \gets$ \Call{ZeroCopySerialize}{$key$, $allTexts[start{:}end]$, $\mathbf{E}_k$}
        \State \Call{AsyncUpload}{$key$, $bytes$}
    \EndFor
    \State $partitions \gets []$; $total \gets 0$
\EndProcedure
\end{algorithmic}
\end{algorithm}

\textbf{Input ordering.} Algorithm~\ref{alg:surge} assumes input is ordered by partition key (line~1). For streaming sources where data may arrive out-of-order, a pre-sort pass or partition-key grouping stage is required before SURGE ingestion. In the worst case this is $O(N \log N)$---the same complexity attributed to fixed-batch's regrouping pass. However, in practice, partitioned data stores (Hive, BigQuery, Spark DataFrames) return rows grouped by partition key natively, satisfying this requirement without additional processing. In our production deployment, Spark's hash partitioner provides the ordering guarantee at data read time.

\textbf{Design decisions.} The \texttt{copy(texts)} on line~12 of Algorithm~\ref{alg:surge} snapshots the current partition's text list before the caller clears it for the next partition; this is a shallow list copy ($O(n_k)$ reference copies), not a deep string copy, and ensures correctness without measurable overhead.

\textbf{The two-threshold scheme.} The lower threshold $\bmin = 100{,}000$ is the \emph{efficiency trigger}: it ensures each GPU call processes at least $\bmin / G$ texts per GPU, well above the IPC-dominated regime, and is hit on the common path. The upper threshold $\bmax = 500{,}000$ is the \emph{memory-safety trigger}: it is a rare-but-required emergency flush that fires only when a single arriving partition would push the running total past $\bmax$ before the next $\bmin$ flush would otherwise occur. $\bmax$ is not a second efficiency knob---it bounds peak memory unconditionally, including under adversarial arrival orders where one large partition immediately follows a SuperBatch already near $\bmin$. With $\bmax = 500{,}000$, $L = 47$, $d = 384$, the data-resident memory bound is:
\begin{equation}\label{eq:memory}
M(\bmax) = \bmax \cdot (L + 4d) = 791\text{\,MB},
\end{equation}
within a single GPU's 24\,GB VRAM budget. No partition is split across SuperBatches under normal operation; the rare oversized-partition case ($n_k > \bmax$) is handled by emitting the partition as its own SuperBatch (\S\ref{sec:operational}).

\subsection{Asynchronous I/O Pipeline}\label{sec:async}

A synchronous baseline encodes a SuperBatch, then serializes and uploads its outputs before issuing the next encode call. With per-batch encode time $t_{\text{enc}}$ and combined I/O time $t_{\text{ser}} + t_{\text{upl}}$, total wall time is $\sum_j (t_{\text{enc},j} + t_{\text{ser},j} + t_{\text{upl},j})$---I/O latency stacks linearly on the critical path. The async I/O pipeline decouples GPU computation from storage writes using a producer-consumer pattern backed by a thread pool (Algorithm~\ref{alg:async}), enabling I/O of batch $j$ to overlap with the encode of batch $j{+}1$.

\begin{algorithm}[t]
\caption{Asynchronous Storage Upload}
\label{alg:async}
\begin{algorithmic}[1]
\Require Thread pool with $W$ workers; retry: max 3 attempts, $2^{\text{attempt}}$\,s backoff
\State $pending \gets \{\}$
\Procedure{AsyncUpload}{$path$, $data$}
    \State $future \gets \text{pool.submit}(\textsc{UploadWithRetry}, path, data)$
    \State $pending[path] \gets future$ \Comment{Non-blocking return}
\EndProcedure
\Procedure{UploadWithRetry}{$path$, $data$}
    \For{$a = 0$ to $2$}
        \State \textbf{try} $\text{storage.write}(path, data)$; \textbf{return}
        \State \textbf{catch} \textbf{sleep}($2^a$\,s)
    \EndFor
\EndProcedure
\end{algorithmic}
\end{algorithm}

\textbf{Overlap model.} Let $t_{\text{enc}}$, $t_{\text{ser}}$, $t_{\text{upl}}$ denote the times for encoding, serialization, and upload of a single SuperBatch. Without pipelining, each SuperBatch takes $t_{\text{enc}} + t_{\text{ser}} + t_{\text{upl}}$. With async I/O, serialization and upload of batch $j$ overlap with encoding of batch $j{+}1$, reducing effective time per batch to $\max(t_{\text{enc}}, t_{\text{ser}} + t_{\text{upl}})$. We define the \emph{I/O overlap ratio}:
\begin{equation}\label{eq:overlap}
\rho = 1 - \frac{\max(0, (t_{\text{ser}} + t_{\text{upl}}) - t_{\text{enc}})}{t_{\text{ser}} + t_{\text{upl}}}
\end{equation}
When $t_{\text{enc}} \geq t_{\text{ser}} + t_{\text{upl}}$, $\rho = 1$ (perfect overlap). When storage latency is high enough that $t_{\text{ser}} + t_{\text{upl}} > t_{\text{enc}}$, asynchronous pipelining prevents this from stalling the GPU. Figure~\ref{fig:pipeline-overlap} illustrates the overlap.

\begin{figure}[tbp]
\centering
\begin{tikzpicture}[
    x=0.72cm, y=0.55cm,
    enc/.style={fill=compEncode!70, draw=compEncode!90!black, line width=0.5pt},
    ser/.style={fill=compSerialize!60, draw=compSerialize!80!black, line width=0.5pt},
    upl/.style={fill=compUpload!60, draw=compUpload!80!black, line width=0.5pt},
    idle/.style={fill=compOverhead!30, draw=compOverhead!60, line width=0.5pt, pattern=north east lines, pattern color=compOverhead!50},
    lbl/.style={font=\tiny, text=black},
]
\node[anchor=east, font=\scriptsize\bfseries] at (-0.2, 5.9) {PBP};
\draw[enc] (0,5.5) rectangle (1.0,6.3) node[midway, lbl] {E$_1$};
\draw[ser] (1.0,5.5) rectangle (1.2,6.3);
\draw[upl] (1.2,5.5) rectangle (1.6,6.3);
\draw[idle] (1.6,5.5) rectangle (1.8,6.3);
\draw[enc] (1.8,5.5) rectangle (2.1,6.3);
\draw[ser] (2.1,5.5) rectangle (2.25,6.3);
\draw[upl] (2.25,5.5) rectangle (2.55,6.3);
\draw[idle] (2.55,5.5) rectangle (2.75,6.3);
\draw[enc] (2.75,5.5) rectangle (3.0,6.3);
\draw[ser] (3.0,5.5) rectangle (3.1,6.3);
\draw[upl] (3.1,5.5) rectangle (3.35,6.3);
\node[lbl] at (4.0,5.9) {$\cdots$ 4K calls};

\node[anchor=east, font=\scriptsize\bfseries] at (-0.2, 3.9) {SURGE sync};
\draw[enc] (0,3.5) rectangle (1.8,4.3) node[midway, lbl] {Enc SB$_1$};
\draw[ser] (1.8,3.5) rectangle (2.15,4.3);
\draw[upl] (2.15,3.5) rectangle (2.75,4.3);
\draw[enc] (2.75,3.5) rectangle (4.55,4.3) node[midway, lbl] {Enc SB$_2$};

\node[anchor=east, font=\scriptsize\bfseries] at (-0.2, 1.9) {SURGE async};
\node[anchor=east, font=\tiny, gray] at (-0.2, 1.5) {\textit{GPU}};
\draw[enc] (0,1.5) rectangle (1.8,2.3) node[midway, lbl] {Enc SB$_1$};
\draw[enc] (1.8,1.5) rectangle (3.6,2.3) node[midway, lbl] {Enc SB$_2$};
\draw[enc] (3.6,1.5) rectangle (5.0,2.3) node[midway, lbl] {Enc SB$_3$};
\node[anchor=east, font=\tiny, gray] at (-0.2, 0.7) {\textit{I/O}};
\draw[ser] (1.8,0.3) rectangle (2.15,1.1);
\draw[upl] (2.15,0.3) rectangle (2.85,1.1);
\draw[ser] (3.6,0.3) rectangle (3.95,1.1);
\draw[upl] (3.95,0.3) rectangle (4.65,1.1);

\draw[decorate, decoration={brace, amplitude=3pt, mirror}, accentGold, thick]
    (1.8, 0.1) -- (2.85, 0.1) node[midway, below=3pt, font=\tiny, accentGold!80!black] {async overlap};

\node[anchor=west, font=\scriptsize] at (0, -0.8) {
  \tikz[baseline=-0.5ex]{\fill[compEncode!70, draw=compEncode!90!black] (0,0) rectangle (0.25,0.15);} Encode\quad
  \tikz[baseline=-0.5ex]{\fill[compSerialize!60, draw=compSerialize!80!black] (0,0) rectangle (0.25,0.15);} Serialize\quad
  \tikz[baseline=-0.5ex]{\fill[compUpload!60, draw=compUpload!80!black] (0,0) rectangle (0.25,0.15);} Upload\quad
  \tikz[baseline=-0.5ex]{\fill[compOverhead!30, draw=compOverhead!60, pattern=north east lines, pattern color=compOverhead!50] (0,0) rectangle (0.25,0.15);} Idle
};
\end{tikzpicture}
\caption{Pipeline overlap. \textbf{PBP:} many small GPU calls with idle gaps between them. \textbf{SURGE sync:} few large encode calls but I/O blocks the next encode. \textbf{SURGE async:} I/O of SuperBatch $j$ overlaps with encode of SuperBatch $j{+}1$ on a separate I/O thread, eliminating storage stalls on the critical path.}
\label{fig:pipeline-overlap}
\end{figure}

\subsection{Zero-Copy Embedding Serialization}\label{sec:zerocopy}

The naive approach constructs $N \times d$ Python float objects:

\begin{lstlisting}[caption={Naive: $O(Nd)$ Python objects},label=lst:naive]
lists = [row.tolist() for row in emb]
table = pa.table({"embedding": lists})
\end{lstlisting}

\noindent For $N = 200{,}000$ and $d = 384$, this allocates ${\sim}$2.3\,GB of transient Python objects. Our zero-copy path:

\begin{lstlisting}[caption={Zero-copy: $O(1)$ allocations},label=lst:zerocopy]
flat = pa.array(emb.ravel(), type=pa.float32())
col = pa.FixedSizeListArray.from_arrays(flat, d)
table = pa.table({"embedding": col})
\end{lstlisting}

\noindent \texttt{ravel()} returns a view (zero-copy for C-contiguous arrays), \texttt{pa.array()} wraps the buffer without copying, and \texttt{FixedSizeListArray.\allowbreak{}from\_arrays()} records the list size with no data movement. The path allocates $O(1)$ Python objects regardless of $N$.

\textbf{Lifetime and aliasing.} Because the Arrow buffer aliases the NumPy array's memory, the embedding matrix must outlive every \texttt{AsyncUpload} future that references it: SURGE retains the matrix by capturing it in the upload closure, and the thread pool's completion of the future drops the last reference. The aliased buffer must not be mutated in place; downstream consumers receive a read-only view. In our pipeline this constraint is satisfied trivially because the output is write-once. Table~\ref{tab:serialization} quantifies the speedup.

\subsection{Multi-GPU Coordination}

The system supports 1 to $G$ GPUs via multi-process encoding. At startup, GPU type and count are detected via CUDA runtime queries and per-GPU batch sizes selected from a configuration table (T4/L4: 1{,}024; A100/H100: 2{,}048). The multi-GPU pool is started once and reused across all SuperBatch flushes, amortizing process spawn cost. CUDA optimizations include \texttt{cudnn.benchmark}, TF32 on Ampere+, and expandable memory segments to prevent fragmentation over long runs.

\subsection{Resume Capability}

Production pipelines must handle interruptions. The system implements idempotent resume by scanning the output storage prefix for existing partition paths. Each path is deterministic (constructed from partition key and run identifier), enabling an $O(P)$ existence check at startup. If failure occurs mid-SuperBatch, the entire SuperBatch is re-processed on resume---at most $\bmax$ texts are re-encoded, a bounded cost. Completed partitions from prior SuperBatches are skipped via the idempotent path check, ensuring exactly-once output semantics without a separate transaction log.

\section{Formal Analysis}\label{sec:analysis}

We develop a cost model that predicts SURGE's throughput improvement from first principles and validate it against measured results.

\subsection{IPC Amortization Bound}

\begin{theorem}[IPC Amortization]\label{thm:ipc}
Let $N$ texts be distributed across $P$ partitions, processed on $G$ GPUs with per-call IPC overhead $\cipc$ and per-text encoding cost $\cenc$. Define $\alpha = P \cdot \cipc / (N \cdot \cenc / G)$ as the \emph{IPC-to-compute ratio} for PBP processing, and let SURGE use threshold $\bmin$, producing $F = \lceil N / \bmin \rceil$ flushes. Then:
\begin{equation}\label{eq:speedup}
\text{Speedup} = \frac{T_{\textit{PBP}}}{T_{\textit{SURGE}}} = \frac{1 + \alpha}{1 + \alpha \cdot F / P}
\end{equation}
\end{theorem}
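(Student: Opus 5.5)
The plan is to sum the per-call cost model of Equation~\eqref{eq:partition-time} over the calls each strategy issues, then divide the two totals. For PBP, every partition $k$ becomes one encode call costing $\cipc + n_k \cenc / G$. Summing over $k \in [P]$ and using $\sum_k n_k = N$ gives
\begin{equation*}
T_{\textit{PBP}} = P\cdot\cipc + \frac{N\cenc}{G} = \frac{N\cenc}{G}\,(1+\alpha),
\end{equation*}
where the second form follows directly from the definition of $\alpha$.

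For SURGE, Algorithm~\ref{alg:surge} groups the partitions into disjoint SuperBatches $S_1,\dots,S_F$. No partition is split under normal operation, so these sets partition $[P]$ and each text is encoded exactly once. Each flush is one call to \texttt{encode\_multi\_process}. Applying the same linear model to a call on $m_j = \sum_{k\in S_j} n_k$ texts gives cost $\cipc + m_j\cenc/G$. Because $\sum_j m_j = N$, the compute term telescopes to $N\cenc/G$ again. This leaves
\begin{equation*}
T_{\textit{SURGE}} = F\cdot\cipc + \frac{N\cenc}{G} = \frac{N\cenc}{G}\left(1 + \alpha\,\frac{F}{P}\right),
\end{equation*}
since $F\cipc = (F/P)\cdot P\cipc$. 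Taking the ratio cancels the common factor $N\cenc/G$ and yields Equation~\eqref{eq:speedup}.

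The main obstacle is justifying that SURGE makes exactly $F=\lceil N/\bmin\rceil$ calls. Two features of Algorithm~\ref{alg:surge} get in the way:
\begin{itemize}
\item Flushes fire only at partition boundaries once the running total reaches $\bmin$, so each SuperBatch holds at least $\bmin$ texts, except possibly the final one.
\item The final call and any $\bmax$ or oversized-partition flushes can change the count.
\end{itemize}
The first feature gives the bound $F \le \lceil N/\bmin\rceil$. Overshoot past $\bmin$ can only make the count smaller, so the true count $F'$ satisfies $F' \le \lceil N/\bmin\rceil$, and $T_{\textit{SURGE}}$ is monotone in the count.

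I would therefore do the following. First, prove the identity with $F$ defined as the actual number of flushes; the algebra above holds verbatim for any grouping. Then state that taking $F=\lceil N/\bmin\rceil$ is the modeling idealization, and that it gives a lower bound on the speedup, since the true $F'\le F$ only increases it. A secondary point is the model's assumptions: $\cipc$ is constant per call and the per-text cost is linear, with no batch-size-dependent GPU efficiency. Serialization and upload are hidden by the asynchronous pipeline of \S\ref{sec:async}, meaning $\rho=1$. I would state these explicitly as hypotheses rather than prove them.
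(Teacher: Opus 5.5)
Your proposal is correct and follows essentially the paper's proof: sum the linear per-call cost over the $P$ PBP calls and over the SURGE flushes, factor out $N\cenc/G$, and take the ratio. Your handling of $F$ is tighter than the paper's. You prove the identity for the actual flush count and observe that $\lceil N/\bmin\rceil$ is only an upper bound, so the formula is a lower bound on the speedup; the paper instead states the call count is ``at most $F$'' and then silently uses equality.
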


\begin{proof}
The PBP wall-clock time (ignoring I/O overlap) is:
\begin{equation}
T_{\textit{PBP}} = P \cdot \cipc + \frac{N \cdot \cenc}{G} = \frac{N \cdot \cenc}{G}(1 + \alpha)
\end{equation}
SURGE accumulates texts until reaching $\bmin$, then flushes. The number of encode calls is at most $F = \lceil N / \bmin \rceil$. Each call incurs one IPC overhead:
\begin{align}
T_{\textit{SURGE}} &= F \cdot \cipc + \frac{N \cdot \cenc}{G} \notag \\
&= \frac{N \cdot \cenc}{G}\!\left(1 + \alpha \cdot \frac{F}{P}\right)
\end{align}
The ratio yields Equation~\ref{eq:speedup}.
\end{proof}

\begin{corollary}[Regime analysis]\label{cor:regimes}
\hfill
\begin{enumerate}
    \item \textbf{IPC-dominated} ($\alpha \gg 1$): Speedup $\to P / F$. For $P = 4{,}000$ and $F = 100$: speedup $\to 40\times$.
    \item \textbf{Compute-dominated} ($\alpha \ll 1$): Speedup $\to 1$. SURGE provides no benefit when IPC is negligible.
    \item \textbf{Mixed regime} ($\alpha \approx 1$): The measured $\alpha$ determines the benefit. Estimating from our benchmark ($\cipc \approx 0.087$\,s, $\cenc \approx 0.149$\,ms, $G = 4$): $\alpha = 4{,}000 \times 0.087 / (10^7 \times 1.49 \times 10^{-4} / 4) = 348 / 372.5 = 0.93$. Predicted speedup $= (1 + 0.93) / (1 + 0.93 \times 100/4000) = 1.93 / 1.023 = 1.89$, closely matching the measured $1.92\times$ (Table~\ref{tab:main-results}), with error ${<}$2\%.
\end{enumerate}
\end{corollary}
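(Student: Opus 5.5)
The corollary is a set of limiting statements about the closed form in Theorem~\ref{thm:ipc}, so I will work directly from Equation~\ref{eq:speedup},
\[
S(\alpha)=\frac{1+\alpha}{1+\alpha F/P},
\]
with $F$ and $P$ held fixed and $\alpha$ varied. The preliminary step is to check that $F\le P$ in the relevant setting, so that the denominator is at most the numerator and $S\ge 1$. This holds whenever $\bmin$ is at least the mean partition size $N/P$, which is true in our workload ($N/P=2{,}500\ll\bmin$). I also note that $S$ is monotone nondecreasing in $\alpha$, since
\[
S'(\alpha)=\frac{1-F/P}{(1+\alpha F/P)^2}\ge 0 .
\]
This monotonicity ties the three regimes together as a single interpolation from $1$ up to $P/F$.

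For item~1, I divide numerator and denominator by $\alpha$ to get
\[
S=\frac{1/\alpha+1}{1/\alpha+F/P}.
\]
As $\alpha\to\infty$ this tends to $P/F$. For the numerical instance I substitute $P=4{,}000$ and $F=100$, which gives $40$. I will add a remark that convergence is slow: the deviation from the limit is governed by $1/\alpha$ relative to $F/P$, so "$\alpha\gg 1$" should really be read as $\alpha\gg P/F$ for the limit to be approached closely.

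For item~2, I use $F/P\le 1$ to obtain the sandwich
\[
1\le S\le 1+\alpha,
\]
so $S\to 1$ as $\alpha\to 0$. For item~3 the work is pure substitution. I first compute $N\cenc/G = 10^7\cdot1.49\times10^{-4}/4=372.5$\,s and $P\cipc=348$\,s, which give $\alpha\approx 0.93$. With $F=\lceil 10^7/10^5\rceil=100$, I then evaluate $S=1.93/1.023\approx1.89$. Finally I compare this with the measured $1.92\times$ from Table~\ref{tab:main-results}: the relative error is $|1.89-1.92|/1.92\approx1.6\%<2\%$.

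The only real obstacle is in item~3, and it is one of interpretation rather than algebra. The constants $\cipc$ and $\cenc$ are empirical estimates, and $F$ is an upper bound: the proof of Theorem~\ref{thm:ipc} only gives "at most $F$" calls. Partition granularity means each flush overshoots $\bmin$, so the true number of calls can be somewhat smaller than $100$. I will therefore state the prediction as a bound and note the direction of the effect. Using fewer calls shrinks the denominator slightly and moves the prediction toward the measured value, so the stated ${<}2\%$ error is conservative.
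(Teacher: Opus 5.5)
Your proposal is correct and takes the same route as the paper: the corollary is just the $\alpha\to\infty$ and $\alpha\to 0$ limits of Equation~\ref{eq:speedup}, plus direct substitution of the benchmark constants. Your additions are sound refinements that the paper leaves implicit: the check that $F \le P$, monotonicity of the speedup in $\alpha$, the caveat that the limit $P/F$ is only approached once $\alpha \gg P/F$, and the observation that the actual 89 flushes move the prediction toward the measured $1.92\times$.
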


\textbf{Cross-model validation.} Theorem~\ref{thm:ipc} predicts speedup as a function only of $\alpha$, $P$, and $F$---no model-specific tuning. We validate this on two additional encoders. On bge-base (109M, $d{=}768$, 2$\times$L4) the back-solved $\cipc{=}0.081$\,s and $\cenc{=}0.215$\,ms yield $\alpha{=}0.603$ and a predicted $1.31\times$ speedup over PBP; measured $1.29\times$ (error $1.3\%$; Table~\ref{tab:e5large}). On E5-large (335M, $d{=}1024$, 4$\times$L4) the predicted $1.34\times$ matches the measured $1.32\times$ (error $1.5\%$). Across three encoders spanning a $15\times$ parameter range, the model's prediction error stays within $2\%$, supporting Theorem~\ref{thm:ipc} as a workload-planning tool.

\subsection{Compute Intensity}\label{sec:compute-intensity}

We define the \emph{compute intensity} $\mathcal{I}$ as the ratio of GPU kernel time to total encode-call time:
\begin{equation}\label{eq:intensity}
\mathcal{I} = \frac{t_{\text{kernel}}}{t_{\text{tokenize}} + t_{\text{transfer}} + t_{\text{kernel}}}
\end{equation}
The expected GPU utilization under SURGE is bounded by the \emph{encode duty cycle} $\delta$ multiplied by compute intensity:
\begin{equation}\label{eq:gpu-util}
\text{GPU\%} \leq \delta \cdot \mathcal{I}, \quad \text{where } \delta = \frac{\sum_j t_{\text{enc},j}}{T_{\text{wall}}}
\end{equation}

For MiniLM-L6-v2 (22M parameters), $\mathcal{I}$ is low because tokenization and CPU-GPU transfer dominate the forward pass time for this small model. The encode duty cycle $\delta$ under SURGE is ${\sim}57\%$ (Table~\ref{tab:main-results}), and the product $\delta \cdot \mathcal{I}$ yields the observed ${\sim}10\%$ GPU utilization measured by \texttt{nvidia-smi}.

For larger models (e.g., E5-large at 335M parameters~\cite{wang2022e5}), $\mathcal{I}$ increases proportionally to model FLOPs while tokenization remains constant, predicting substantially higher GPU utilization. We validate this empirically: E5-large achieves 62\% GPU utilization under SURGE (vs.\ 10\% for MiniLM), confirming the compute intensity scaling. Quantitatively, for MiniLM under SURGE: $\delta = 57\%$ (Table~\ref{tab:main-results}), yielding $\mathcal{I} = 10.6\% / 57\% = 18.6\%$. For E5-large: $\delta \approx 85\%$ (higher encode fraction due to compute dominance), yielding $\mathcal{I} \approx 62\% / 85\% = 73\%$---a $3.9\times$ increase reflecting the $15\times$ larger model operating on the same tokenization and transfer pipeline. SURGE's core benefit remains: it amortizes IPC regardless of $\mathcal{I}$, and the memory/TTFO advantages persist (\S\ref{sec:threats}).

\subsection{Memory Bound}

\begin{lemma}[SuperBatch memory]\label{lem:memory}
The peak data-resident state of the SuperBatch aggregator is bounded by $O(\bmin + \nmax)$, where $\nmax$ is the largest single-partition size in the input. Concretely, for a SuperBatch holding $S$ texts with average length $L$ bytes and embedding dimension $d$:
\begin{equation}
M(S) = S \cdot L + S \cdot d \cdot 4 \text{ bytes}, \quad S \leq \bmin + \nmax \leq \bmax.
\end{equation}
For $S = \bmax = 500{,}000$, $d = 384$, $L = 47$: $M = 23.5\,\text{MB (text)} + 768\,\text{MB (embeddings)} = 791.5$\,MB.
\end{lemma}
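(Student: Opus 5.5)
The plan is an invariant argument on the aggregator state of Algorithm~\ref{alg:surge}. The only persistent data between calls is the list \textit{partitions} together with \textit{total}, plus the single in-progress buffer \textit{curTexts}. I will first show that the running count \textit{total} is strictly below $\bmin$ \emph{between} calls to \textsc{AddPartition}. Initially it equals $0$. Inside \textsc{AddPartition}, whenever $\textit{total}\ge\bmin$ a \textsc{Flush} is triggered, and \textsc{Flush} resets $\textit{total}$ to $0$. So on return, either no flush occurred and $\textit{total}<\bmin$, or a flush occurred and $\textit{total}=0$. Note that the $\bmax$ branch is subsumed: it also flushes. This gives the invariant $\textit{total}<\bmin$ at every entry to \textsc{AddPartition}.

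Next I bound the largest SuperBatch ever encoded. A flush is issued immediately after appending one partition of size $n_k\le\nmax$ to a state satisfying $\textit{total}<\bmin$. Hence the flushed size satisfies $S<\bmin+n_k\le\bmin+\nmax$. The terminal \textsc{Flush} after the stream ends acts on a state with $\textit{total}<\bmin$, so it satisfies the same bound.

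The resident data also includes \textit{curTexts}, at most $\nmax$ texts. It coexists with the aggregated partitions only before they are appended, so the simultaneous total is still at most $\bmin+\nmax$ texts. Inside \textsc{Flush}, the concatenated \textit{allTexts} and the embedding matrix $\mathbf{E}$ are of size $S$. The per-partition slices $\mathbf{E}_k$ and the zero-copy Arrow columns alias $\mathbf{E}$ (\S\ref{sec:zerocopy}), so they contribute no extra $O(Sd)$ storage. Summing text bytes $S\cdot L$ and float32 embeddings $4Sd$ gives $M(S)$. The shallow copies and bookkeeping are $O(P_{\text{batch}})\le O(S)$, which absorbs into the $O(\bmin+\nmax)$ bound.

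The main obstacle is the inequality $\bmin+\nmax\le\bmax$. It is not a consequence of the algorithm but a parameter condition. With $\bmin=10^5$, $\bmax=5\cdot10^5$ and the observed $\nmax=447{,}231$, it fails by a small margin. I would therefore state it explicitly as an assumption ($\nmax\le\bmax-\bmin$). For inputs violating it, I would appeal to the oversized-partition handling of \S\ref{sec:operational}: when the incoming partition would exceed $\bmax$, the existing buffer is flushed first (size $<\bmin$) and the large partition is emitted as its own SuperBatch. This restores $S\le\max(\bmin,\nmax)$, and the asymptotic $O(\bmin+\nmax)$ claim holds unconditionally.

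A second subtlety is pipelining. Async uploads may keep the previous SuperBatch's $\mathbf{E}$ alive while the next one is encoded, doubling the constant. I would note this as at most a factor $2$ (or $W$ for the thread pool), leaving the order unchanged. The final numeric instance is the arithmetic $5\cdot10^5(47+1536)$ bytes.
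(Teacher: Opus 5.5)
Your invariant argument is the paper's proof. The paper's one-sentence justification is exactly that \textsc{Flush} fires once $\textit{total}\ge\bmin$, so a flushed SuperBatch is the previous buffer ($<\bmin$) plus one partition ($\le\nmax$). Your accounting $M(S)=SL+4Sd$ and the arithmetic $5\cdot10^5(47+1536)$ bytes $=791.5$\,MB also agree with the paper.

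You depart from the paper on the trailing inequality $\bmin+\nmax\le\bmax$, and there your reading is the correct one. The paper says the $\bmax$ trigger is an unconditional ceiling that fires when an arriving partition ``would push'' the buffer past $\bmax$. In Algorithm~\ref{alg:surge}, however, that test runs after the append and performs the same \textsc{Flush} as the $\bmin$ branch, so it never changes $S$. With the paper's own $\nmax=447{,}231$, a buffer just below $\bmin$ followed by the largest partition flushes up to $\bmin-1+\nmax=547{,}230>\bmax$ texts, about 866\,MB by the lemma's formula. At $\sigma=2.5$ a single partition alone exceeds $\bmax$. So the $O(\bmin+\nmax)$ part holds unconditionally. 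But $S\le\bmax$, and hence the 791.5\,MB worst case, needs your explicit hypothesis $\nmax\le\bmax-\bmin$ or a changed algorithm.

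Two details of your repair are off.

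First, a pre-append $\bmax$ check does not give $S\le\max(\bmin,\nmax)$. A buffer of $\bmin-1$ texts followed by a partition of $\bmax-\bmin+1$ texts passes the check and flushes exactly $\bmax$ texts. That exceeds $\max(\bmin,\nmax)$ whenever $\nmax<\bmax$. The correct repaired bound is $S\le\max(\bmax,\nmax)$ together with $S<\bmin+\nmax$. When $\nmax>\bmax$, no policy that emits partitions whole can give $S\le\bmax$. That requires splitting oversized partitions across SuperBatches, which is what \S\ref{sec:operational} describes (but \S\ref{sec:aggregator} does not).

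Second, your factor-$2$ (or $W$) allowance for in-flight uploads is not implied by Algorithm~\ref{alg:async}. \texttt{pool.submit} never blocks, and each queued future retains its SuperBatch's data. So the number of live SuperBatches is bounded only with backpressure (a bounded submission queue) or when I/O keeps pace with encoding.

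Neither slip affects the $O(\bmin+\nmax)$ conclusion for the aggregator's own state.
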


\noindent The bound holds because Algorithm~\ref{alg:surge} flushes once $total \geq \bmin$, so the running buffer never exceeds the previous total ($< \bmin$) plus the latest partition's $n_k \leq \nmax$. The $\bmax$ trigger is a tighter unconditional ceiling that activates only when a single arriving partition would push the buffer past $\bmax$ before the $\bmin$ check fires---guaranteeing the bound even under adversarial arrival orders.

\noindent This data-resident lower bound accounts only for raw text and output embeddings; it is intentionally conservative. In practice, SURGE's measured peak memory is 2.5\,GB (Table~\ref{tab:main-results}), approximately $3\times$ the theoretical lower bound. The gap is attributable to three additional allocations: (1)~tokenizer buffers---token IDs and attention masks are \texttt{int64} tensors of shape $(\bmax, \text{seq\_len})$, adding ${\sim}500$\,MB; (2)~model-internal activations and intermediate tensors during encoding; and (3)~Python runtime overhead (interpreter, garbage collector, and SentenceTransformers process pool memory). These costs are constant across methods and independent of partition count, so the $O(\bmin + \nmax)$ scaling holds.

\subsection{Connection to Bin Packing}

SURGE's greedy accumulation can be viewed as a variant of the \emph{Next Fit} bin packing heuristic~\cite{coffman1978multiprocessor, coffman2013survey} with a minimum-fill constraint. Classical Next Fit opens a new bin when the current bin exceeds capacity $\bmax$; SURGE additionally requires the bin to reach $\bmin$ before closing. When partition sizes are i.i.d.\ with mean $\mu$ and variance $\sigma^2$, the expected fill ratio of each SuperBatch follows from renewal theory~\cite{feller1971introduction}. The SuperBatch accumulates partitions until the running sum of sizes first exceeds $\bmin$; by Wald's identity for random walks~\cite{wald1944sequential}, the expected overshoot is $\sigma^2 / (2\mu)$ for large $\bmin / \mu$, giving:
\begin{equation}\label{eq:fill-ratio}
\mathbb{E}\!\left[\frac{S}{\bmin}\right] = 1 + \frac{\sigma^2}{2\mu \bmin} + O\!\left(\frac{1}{\bmin}\right)
\end{equation}
For large $\bmin / \mu$ (i.e., many partitions per SuperBatch), this approaches 1---near-optimal packing. With $\mu = 8{,}412$, $\sigma = 17{,}660$, and $\bmin = 100{,}000$: the expected fill ratio is $1 + 17{,}660^2 / (2 \times 8{,}412 \times 100{,}000) \approx 1.19$, meaning SuperBatches are on average 19\% overfull relative to $\bmin$. This yields predictable flush counts and stable throughput. Offline alternatives such as First-Fit-Decreasing or Best-Fit-Decreasing achieve marginally tighter packing ($99.8\%$ vs.\ $99.2\%$ fill ratio in simulation) at the cost of requiring all partition sizes upfront---incompatible with the streaming arrival model that the $O(\bmin + \nmax)$ bound depends on.

\section{Evaluation}\label{sec:eval}

\subsection{Experimental Setup}\label{sec:setup}

\textbf{Dataset.} We generate a synthetic dataset of 10M texts across 4{,}000 partitions with sizes drawn from a log-normal distribution ($\mu = 9.03$, $\sigma = 1.72$) matching production workload characteristics (\S\ref{sec:partition-problem}). Texts are synthetic sentences averaging 47 bytes, consistent with product title lengths. Note: the production deployment processes 800M+ texts across 40{,}000 partitions; our benchmark uses a representative 10M-text subset (scaling to 50M in \S\ref{sec:scaling}, up to $P = 20{,}000$ partitions) to enable controlled evaluation.

\textbf{Hardware.} A single GCP g2-standard-48 node with 4 NVIDIA L4 GPUs (24\,GB VRAM each), 48 vCPUs, 192\,GB RAM. This configuration is denoted ``*'' in figures. Sub-experiments noted in their respective tables use 2$\times$L4 (g2-standard-24) for parity with model availability constraints during the bge-base and PB-PBP-LB runs.

\textbf{Model.} all-MiniLM-L6-v2~\cite{wang2020minilm}: 22M parameters, 384-dimensional L2-normalized embeddings. We use multi-process encoding via Sentence-Transformers~\cite{reimers2020multilingual} with batch size 1{,}024 per GPU. A warmup encode of 1{,}024 texts is performed before timing to initialize CUDA contexts. Generalization to bge-base-en-v1.5 (109M, $d{=}768$) and E5-large (335M, $d{=}1024$) is reported in \S\ref{sec:model-generalization}.

\textbf{Storage backend.} Results use a latency-simulated GCS backend (base latency 10\,ms, throughput 200\,MB/s per write) unless otherwise noted. This provides realistic I/O conditions while ensuring reproducibility.

\textbf{Baselines.} (1)~\emph{PBP}: each partition encoded independently. (2)~\emph{Fixed-batch (FSB)}: texts streamed in fixed chunks (10K, 50K, 100K) with an $O(N \log N)$ argsort-based regrouping pass. (3)~\emph{PB-PBP-LB}: a stronger partition-batched baseline that pre-sorts partitions by columnar size statistics and FFD-packs whole partitions into batches of size $B$ (\S\ref{sec:pbpbp-comparison}). (4)~\emph{SURGE (sync)}: SuperBatch aggregation without async I/O. (5)~\emph{SURGE + AsyncIO}: the full system. All methods use identical model weights, tokenizer settings, and output serialization format (Apache Arrow) to ensure fair comparison. The only variable is the batching and I/O strategy.

\textbf{Metrics.} Throughput (texts/s), GPU utilization (\%, \texttt{nvidia-smi} at 100\,ms via \texttt{pynvml}), encode duty cycle $\delta$ (fraction of wall time in encode calls), wall time (s), cost (\$/M texts at \$7.30/hr), I/O overlap ratio $\rho$ (Equation~\ref{eq:overlap}; $\rho{=}1$ indicates I/O fully overlaps with encode). Peak memory is measured via \texttt{psutil} RSS sampling at 100\,ms intervals; GPU memory is excluded as model weights are constant across methods.

\textbf{Reproducibility.} All results report the mean of 3 independent runs under controlled conditions (identical data, warmup, and cleanup between methods). Tables report mean $\pm$ 1 standard deviation where applicable. Run-to-run variance is ${<}$1\% for throughput metrics, confirming measurement stability. Between runs, we clear filesystem caches and restart GPU processes to eliminate warm-cache effects. The benchmark harness, synthetic data generator, and analysis scripts are included in the supplementary materials.

\subsection{End-to-End Results}\label{sec:e2e-results}

\begin{table*}[t]
\caption{End-to-end comparison on benchmark workload (GCS-profile storage, mean of 3 runs; run-to-run std $<$1\% for throughput). Duty\% = encode time / wall time $\times$ 100. Proposed method in \textbf{bold}.}
\label{tab:main-results}
\centering
\small
\begin{tabular}{lrrrrrcrc}
\toprule
Method & Tput (t/s) & Duty\% & GPU\% & Time (s) & \$/M & $\rho$ & Mem (GB)\textsuperscript{a} & TTFO (s) \\
\midrule
Partition-by-partition & 13,766 & 79.2 & 6.3 & 726.4 & 0.15 & N/A & 2.5 & 0.5 \\
Fixed-batch-10K & 22,509 & 65.8 & 7.8 & 444.3 & 0.09 & N/A & 45.4 & 320.9 \\
Fixed-batch-50K & 26,070 & 60.3 & 10.0 & 383.6 & 0.08 & N/A & 32.6 & 259.8 \\
Fixed-batch-100K & 27,074 & 58.8 & 10.7 & 369.4 & 0.07 & N/A & 32.7 & 245.5 \\
SURGE (sync I/O) & 21,923 & 47.6 & 7.7 & 456.1 & 0.09 & 0.98 & 2.6 & 4.6 \\
\textbf{SURGE + AsyncIO (ours)} & \textbf{26,413} & \textbf{57.4} & \textbf{10.6} & \textbf{378.6} & \textbf{0.08} & \textbf{1.00} & \textbf{2.6} & \textbf{3.6} \\
\bottomrule
\end{tabular}
\par\vspace{2pt}
\raggedright{\scriptsize \textsuperscript{a}Memory = peak RSS; GPU VRAM excluded (model weights constant across methods).\\
Throughput std: PBP $\pm$49 (0.36\%), FB-10K $\pm$32 (0.14\%), FB-50K $\pm$22 (0.08\%), FB-100K $\pm$10 (0.04\%), SURGE sync $\pm$6 (0.03\%), SURGE async $\pm$18 (0.07\%).}
\end{table*}

Table~\ref{tab:main-results} presents the end-to-end comparison. SURGE with async I/O achieves 26{,}413\,texts/s, matching FB-100K (27{,}074\,texts/s) within 3\%---confirming that \emph{any} method reducing encode calls from $P$ to ${\sim}100$ reaches the same throughput ceiling. The speedup over PBP is $1.92\times$, closely predicted by Theorem~\ref{thm:ipc} (predicted $1.89\times$, error ${<}$2\%). This validates the cost model: both methods make ${\sim}100$ encoding calls, achieving identical IPC amortization. Flush-level timing confirms this (Figure~\ref{fig:flush-decomposition}): SURGE makes 89 encode calls totaling 218.6\,s of GPU time, while FB-100K makes 1 call totaling 220.1\,s---virtually identical GPU utilization despite a $44\times$ difference in call count. Figure~\ref{fig:flush-decomposition} decomposes wall time into encode, serialize, upload, and overhead stages, showing that SURGE async eliminates upload stalls entirely through I/O pipelining.

The critical difference lies in \emph{deployability}. FB-100K requires 32.7\,GB peak memory (Table~\ref{tab:main-results}), growing linearly with $N$ (Figure~\ref{fig:scaling}), and produces no output for 245\,s while all encoding completes. SURGE operates at 2.6\,GB peak memory with 3.6\,s time-to-first-output---a $12.6\times$ memory reduction and $68\times$ faster first output. SURGE sync shows the cost of blocking I/O: 21{,}923\,texts/s ($-17\%$; Table~\ref{tab:main-results}), confirming that async pipelining is essential when storage latency is non-trivial.

\begin{figure}[t]
\centering
\begin{tikzpicture}
\begin{axis}[
    surge bar,
    width=\columnwidth,
    height=0.618\columnwidth,  
    ylabel={Throughput (texts/s)},
    symbolic x coords={PBP, FB-10K, FB-50K, FB-100K, {SURGE sync}, {SURGE async}},
    xtick=data,
    x tick label style={rotate=25, anchor=east, font=\scriptsize},
    ymin=0,
    ymax=32000,
    ytick={0, 10000, 20000, 30000},
    yticklabels={0, 10K, 20K, 30K},
    nodes near coords,
    nodes near coords style={font=\tiny, /pgf/number format/1000 sep={\,}},
    every node near coord/.append style={above, yshift=2pt},
    scatter, scatter src=explicit symbolic,
    scatter/classes={
      pbp={fill=compOverhead!70, draw=compOverhead!90!black, line width=0.8pt},
      fb={fill=baselineRed!25, draw=baselineRed!80, line width=0.8pt},
      surge={fill=surgePrimary!40, draw=surgePrimary!80, line width=0.8pt},
      ours={fill=surgePrimary!80, draw=surgePrimary!95!black, line width=1pt}
    },
    bar width=10pt,
    enlarge x limits=0.1,
]
\input{tables/fig3_throughput_coords}
\draw[dashed, surgePrimary!60, line width=0.8pt] (axis cs:PBP,26500) -- (axis cs:{SURGE async},26500);
\node[font=\tiny, surgePrimary!70!black, anchor=west] at (axis cs:PBP,27200) {throughput ceiling};
\node[draw=surgePrimary!60, fill=none, rounded corners=2pt, inner sep=2pt,
      font=\tiny, anchor=north east, text width=1.6cm, align=left]
      at (axis cs:{SURGE async}, 31000) {
  \textbf{47\% savings}\\[-1pt]
  {\scriptsize\$0.15$\to$\$0.08/M}
};
\end{axis}
\end{tikzpicture}
\caption{Throughput comparison across methods. FB-100K and SURGE async both reach the IPC-amortized throughput ceiling (${\sim}$26K\,texts/s), confirming that reducing encode calls from $P{=}4{,}000$ to ${\sim}100$ is the dominant factor. PBP throughput is limited by per-partition IPC overhead. Cost savings (47\%, \$0.15/M$\to$\$0.08/M) reflect the $1.91\times$ speedup from PBP to SURGE async.}
\label{fig:throughput-bars}
\end{figure}

\begin{figure}[tbp]
\centering
\begin{tikzpicture}
\begin{axis}[
    width=\columnwidth,
    height=0.618\columnwidth,  
    ybar stacked,
    bar width=8pt,
    ylabel={Wall time (s)},
    symbolic x coords={PBP, FB-10K, FB-50K, FB-100K, {SURGE sync}, {SURGE async}},
    xtick=data,
    x tick label style={rotate=30, anchor=east},
    ymin=0,
    enlarge x limits=0.12,
    legend style={at={(0.5,-0.22)}, anchor=north, font=\tiny, legend columns=4},
    reverse legend,
]
\input{tables/fig_flush_decomposition_coords}
\end{axis}
\end{tikzpicture}
\caption{Wall-time decomposition by pipeline stage. Encode includes IPC overhead per call. SURGE async eliminates upload stalls by overlapping I/O with the next encode call (upload time ${\approx}0$\,s). All methods spend ${\sim}$125\,s on serialization.}
\label{fig:flush-decomposition}
\end{figure}

\subsection{Comparison with a Stronger Partition-Batched Baseline}\label{sec:pbpbp-comparison}

PBP and FSB bracket the baseline space but neither attempts the obvious middle ground: partition-batched IPC with offline load balancing. We implement and evaluate this stronger baseline---Partition-Batched PBP with Columnar-Size Load Balancing (PB-PBP-LB)---to isolate the contribution of SURGE's streaming aggregation and memory-safety bound from the contribution of partition-batched IPC alone.

\textbf{PB-PBP-LB design.} The baseline pre-computes partition sizes from columnar metadata, sorts partitions descending by size, and packs whole partitions into batches of capacity $B$ using First-Fit-Decreasing (FFD). Each batch issues a single \texttt{encode\_multi\_\allowbreak{}process} call. Partitions are never split; the batch boundary is partition-aligned, which preserves output semantics without a regrouping pass.

\begin{table}[tbp]
\caption{Comparison with a stronger baseline: Partition-Batched PBP with Columnar-Size Load Balancing (PB-PBP-LB)---offline sort of partitions by size, FFD-pack whole partitions up to $B$, single IPC call per batch. MiniLM-L6-v2, $N{=}10$M, $P{=}4{,}000$, $\sigma{=}1.72$, 2$\times$L4, mean of 3 seeds.}
\label{tab:pbpbp}
\centering
\small
\resizebox{\columnwidth}{!}{%
\begin{tabular}{lrrrrr}
\toprule
Method & Tput (t/s) & Mem (GB) & TTFO (s) & Calls & Peak batch \\
\midrule
PBP                       & 12{,}190 & 2.39 & 0.26 & 4{,}000 & --- \\
PB-PBP-LB ($B{=}100$K)    & 16{,}718 & 2.49 & 8.52 & 91 & 179{,}814 \\
PB-PBP-LB ($B{=}200$K)    & 16{,}852 & 2.77 & 13.77 & 48 & 268{,}343 \\
\textbf{SURGE+AsyncIO}    & \textbf{17{,}909} & \textbf{2.44} & \textbf{5.43} & \textbf{100} & \textbf{${\leq}\bmax$} \\
\bottomrule
\end{tabular}%
}
\par\vspace{2pt}
\raggedright{\scriptsize PB-PBP-LB closes ${\sim}80\%$ of the PBP$\to$SURGE throughput gap, confirming that partition-batched IPC amortization is the dominant lever (consistent with Theorem~\ref{thm:ipc}). SURGE retains a $7.1\%$ throughput advantage and $2.5\times$ faster TTFO, sourced from streaming aggregation and asynchronous I/O overlap rather than offline sort. The decisive differentiator is Lemma~\ref{lem:memory}'s unconditional $\bmax$ guarantee: at $\sigma{=}1.72$ peak partitions stayed under $\bmax{=}500$K, but at $\sigma{=}2.5$ tail partitions reach ${\approx}1.5$M---FFD never splits a partition, so PB-PBP-LB would emit single-partition batches ${\sim}3\times \bmax$ without bound. SURGE additionally tolerates streaming arrivals (no dependency on columnar metadata being available upfront).}
\end{table}

\textbf{Results.} Table~\ref{tab:pbpbp} presents results at $\sigma{=}1.72$, $N{=}10$M, $P{=}4{,}000$, on 2$\times$L4. PB-PBP-LB closes ${\sim}80\%$ of the PBP$\to$SURGE throughput gap, confirming that partition-batched IPC amortization is the dominant lever (consistent with Theorem~\ref{thm:ipc}: both methods drive encode-call count from 4{,}000 down to fewer than 100). SURGE retains a $7.1\%$ throughput advantage and $2.5\times$ faster TTFO.

\textbf{Where the $7\%$ comes from.} The throughput gap is second-order and reproducible across seeds. Two effects account for it: (1) SURGE's async I/O fully overlaps ($\rho{=}1.0$ in telemetry), while PB-PBP-LB's first-batch latency includes a serial encode of the largest sorted partition before any I/O begins (8.5\,s at $B{=}100$K, 13.8\,s at $B{=}200$K)---this directly explains the $2.5\times$ TTFO gap; (2) PB-PBP-LB's largest-first ordering front-loads big batches, hitting diminishing returns on L4 batch efficiency for the longest sequences.

\textbf{The decisive differentiator: the unconditional $\bmax$ guarantee.} At $\sigma{=}1.72$, peak partitions stayed under $\bmax{=}500$K, so neither method's memory bound was stressed. The $\sigma$-sweep below shows where they diverge: at $\sigma{=}2.5$, tail partitions reach $\approx 1.5$M texts. FFD never splits a partition, so PB-PBP-LB would emit single-partition batches roughly $3\times \bmax$ without bound---feasible only on hardware sized for the worst-case partition. SURGE's two-threshold policy (Lemma~\ref{lem:memory}) guarantees peak memory $O(\bmin + \nmax)$ \emph{independent of partition arrival order}, including for adversarial sequences where the largest partition arrives last.

\textbf{Streaming-arrival tolerance.} PB-PBP-LB requires partition sizes upfront. SURGE processes in arrival order, requiring no metadata pre-pass. For pipelines fed by Spark stages or other streaming sources where partitions are produced incrementally, this matters: the offline sort step would force a full materialization barrier.

\subsection{Component Ablation}\label{sec:ablation}

\begin{table*}[tbp]
\caption{Ablation study isolating each component's contribution (GCS-profile storage, mean of 3 runs). Separate run from Table~\ref{tab:main-results}; ${\sim}2\%$ throughput variation reflects normal run-to-run variance.}
\label{tab:ablation}
\centering
\small
\begin{tabular}{lrrrcrr}
\toprule
Configuration & Tput (t/s) & Duty\% & GPU\% & $\Delta$ vs.\ Full & Mem (GB) & TTFO (s) \\
\midrule
\textbf{Full system} & \textbf{25,930} & \textbf{57.3} & \textbf{9.9} & -- & \textbf{2.6} & \textbf{4.7} \\
w/o SURGE (PBP+AsyncIO) & 13,828 & 79.2 & 6.4 & -46.7\% & 2.7 & 0.5 \\
w/o AsyncIO (SURGE+sync) & 21,689 & 47.9 & 7.9 & -16.4\% & 2.9 & 5.7 \\
w/o zero-copy & 14,832 & 32.8 & 5.9 & -42.8\% & 5.2 & 7.3 \\
w/o multi-GPU (1 GPU) & 10,536 & 82.7 & 4.6 & -59.4\% & 4.3 & 10.1 \\
\bottomrule
\end{tabular}
\end{table*}

Table~\ref{tab:ablation} isolates each component's contribution:
\begin{itemize}
    \item \textbf{SuperBatch aggregation} is the dominant factor ($-46.7\%$ when removed; Table~\ref{tab:ablation}), confirming the IPC amortization thesis.
    \item \textbf{Zero-copy serialization} contributes $-42.8\%$ when removed (Table~\ref{tab:ablation}). This impact exceeds the $22$--$25\times$ serialization speedup (Table~\ref{tab:serialization}) because the ablation measures end-to-end throughput: naive serialization creates $O(Nd)$ Python objects whose allocation and garbage collection stall the main thread, serialization time exceeds encode time (breaking the async I/O overlap invariant), and peak memory nearly doubles (5.2\,GB vs.\ 2.6\,GB), causing additional memory pressure.
    \item \textbf{Async I/O} contributes $-16.4\%$ when removed (Table~\ref{tab:ablation}), validating the pipelining design under GCS-profile storage latency (\S\ref{sec:io-analysis}).
    \item \textbf{Multi-GPU} (4 vs.\ 1 GPU) contributes $-59.4\%$ (Table~\ref{tab:ablation}), yielding a ${\sim}2.5\times$ scaling factor (${\sim}61\%$ parallel efficiency, with the gap attributable to IPC overhead and memory bus contention). The single-GPU configuration shows \emph{higher} peak memory (4.3\,GB vs.\ 2.6\,GB) because SentenceTransformers' multi-GPU mode partitions the encoding batch across workers, so each process holds only $\bmax / G$ embeddings in flight; with a single GPU, the full $\bmax$ embeddings reside in one process.
\end{itemize}

\noindent The new columns in Table~\ref{tab:ablation} reveal additional insights. Removing zero-copy serialization nearly doubles peak memory (5.2\,GB vs.\ 2.6\,GB), confirming that $O(Nd)$ Python object allocation is a significant memory cost. The w/o SURGE (PBP) configuration achieves the fastest TTFO (0.5\,s) since it emits each partition immediately, but at the cost of $-46.7\%$ throughput. Note: the ablation study uses a separate run from Table~\ref{tab:main-results}; the ${\sim}2\%$ throughput difference between the full system (25{,}930\,t/s) and Table~\ref{tab:main-results} (26{,}413\,t/s) reflects normal run-to-run variation.

\subsection{Model Generalization}\label{sec:model-generalization}

\begin{table}[tbp]
\caption{Model generalization across three encoder sizes spanning a $15\times$ parameter range. SURGE matches fixed-batch throughput while bounding memory across all models. The IPC-amortization speedup over PBP shrinks monotonically with model compute intensity ($\phi$ drops $0.48 \to 0.24 \to 0.19$), but the memory and TTFO advantages---which derive from streaming, not from IPC---grow with embedding dimension. The bge-base column independently validates Theorem~\ref{thm:ipc} on a third model with measured-vs-predicted error $1.3\%$.}
\label{tab:e5large}
\centering
\small
\resizebox{\columnwidth}{!}{%
\begin{tabular}{llrrrr}
\toprule
Model & Method & Tput (t/s) & GPU\% & Mem (GB) & TTFO (s) \\
\midrule
\multirow{3}{*}{MiniLM-L6 (22M, $d{=}384$)\textsuperscript{a}} & PBP & 13{,}766 & 6.3 & 2.5 & 0.5 \\
 & FB-100K & 27{,}074 & 10.7 & 32.7 & 245.5 \\
 & SURGE & 26{,}413 & 10.6 & 2.6 & 3.6 \\
\midrule
\multirow{3}{*}{bge-base (109M, $d{=}768$)\textsuperscript{b}} & PBP & 7{,}154 & 31.7 & 3.2 & 0.23 \\
 & FB-100K & 9{,}282 & 41.7 & 63.4 & 835 \\
 & SURGE & 9{,}250 & 42.1 & 3.3 & 10.7 \\
\midrule
\multirow{3}{*}{E5-large (335M, $d{=}1024$)\textsuperscript{a}} & PBP & 4{,}912 & 47.3 & 3.8 & 1.2 \\
 & FB-100K & 6{,}462 & 62.7 & 81.6 & 1{,}236 \\
 & SURGE & 6{,}485 & 61.7 & 4.2 & 15.4 \\
\midrule
\multicolumn{2}{l}{Speedup (SURGE/PBP)} & \multicolumn{4}{l}{$1.92\times \to 1.29\times \to 1.32\times$} \\
\multicolumn{2}{l}{Memory advantage (FB/SURGE)} & \multicolumn{4}{l}{$12.6\times \to 19.2\times \to 19.4\times$} \\
\multicolumn{2}{l}{TTFO advantage (FB/SURGE)} & \multicolumn{4}{l}{$68\times \to 78\times \to 80\times$} \\
\bottomrule
\end{tabular}%
}
\par\vspace{2pt}
\raggedright{\scriptsize \textsuperscript{a}MiniLM and E5-large run on 4$\times$L4. \textsuperscript{b}bge-base run on 2$\times$L4 with per-GPU batch 512; Theorem~\ref{thm:ipc} predicts $1.31\times$ speedup over PBP, measured $1.29\times$ (error $1.3\%$).}
\end{table}

Table~\ref{tab:e5large} validates the compute intensity prediction (\S\ref{sec:compute-intensity}) by comparing three encoders spanning a $15\times$ parameter range: MiniLM-L6-v2 (22M, $d{=}384$), bge-base-en-v1.5 (109M, $d{=}768$)~\cite{xiao2024cpack}, and E5-large (335M, $d{=}1024$)~\cite{wang2022e5}. As predicted by the compute intensity scaling, GPU utilization rises monotonically with model size: 10.6\% $\to$ 42.1\% $\to$ 61.7\% under SURGE.

The IPC-amortization speedup over PBP shrinks monotonically with model size: $1.92\times \to 1.29\times \to 1.32\times$. This is the regime predicted by Theorem~\ref{thm:ipc}: as $\cenc$ grows with model FLOPs, the IPC-to-compute ratio $\alpha$ falls and there is less IPC overhead to amortize. On bge-base, back-solving from PBP at $\cenc{=}0.215$\,ms gives $\alpha{=}0.603$ and predicts a $1.31\times$ speedup; we measure $1.29\times$ (error $1.3\%$). At three encoders, prediction error stays below $2\%$, supporting the cost model as a workload-planning tool independent of the system implementation.

Crucially, SURGE's memory and TTFO advantages---which derive from streaming, not from IPC---grow with embedding dimension. The peak-memory advantage over FB-100K rises from $12.6\times$ (MiniLM) to $19.2\times$ (bge-base) to $19.4\times$ (E5-large). The TTFO advantage rises from $68\times$ to $78\times$ to $80\times$. At the bge-base regime and beyond, the value proposition shifts from throughput amortization to deployability: even when the throughput speedup is modest, FB-100K's $63$\,GB peak memory and $14$-minute TTFO at $10$M texts make it impractical at production scale.

\subsection{Distribution Sensitivity}\label{sec:distribution-sensitivity}

\begin{table}[t]
\caption{Distribution sensitivity. SURGE speedup is invariant within $\pm$3\% across a $2.5\times$ variation in log-normal $\sigma$ (CV from 1.31 to 12.2). MiniLM-L6-v2, $N{=}10$M, $P{=}4{,}000$, 2$\times$L4, mean of 3 seeds. Theorem~\ref{thm:ipc} predicted $1.47\times$ at $\sigma{=}1.72$ from back-solved $\cipc{=}0.067$\,s, $\cenc{=}0.110$\,ms; measured error ${<}0.1\%$.}
\label{tab:sigma-sweep}
\centering
\small
\resizebox{\columnwidth}{!}{%
\begin{tabular}{ccrrcrr}
\toprule
$\sigma$ & CV & PBP (t/s) & SURGE (t/s) & Speedup & Mem (GB) & TTFO (s) \\
\midrule
1.0  & 1.31  & 12{,}288 & 17{,}914 & $1.458\times$ & 2.15 & 5.30 \\
1.72 & 4.37  & 12{,}190 & 17{,}909 & $1.469\times$ & 2.44 & 5.43 \\
2.5  & 12.2  & 13{,}032 & 18{,}379 & $1.410\times$ & 3.55 & 6.23 \\
\bottomrule
\end{tabular}%
}
\par\vspace{2pt}
\raggedright{\scriptsize The modest dip at $\sigma{=}2.5$ traces to $\bmax$ emergency-flush activation on tail partitions ($\exp(\mu{+}3\sigma) \approx 1.5$M $> \bmax{=}500$K)---empirical evidence that Lemma~\ref{lem:memory}'s memory-safety bound is operational, not decorative. Run-to-run std ${<}1\%$ on throughput.}
\end{table}

Theorem~\ref{thm:ipc}'s prediction depends only on the IPC-to-compute ratio $\alpha$, not on the partition-size distribution---the distribution affects $\alpha$ only through the total text count $N$ and the partition count $P$. Table~\ref{tab:sigma-sweep} tests this prediction across log-normal $\sigma \in \{1.0, 1.72, 2.5\}$ (CV from 1.31 to 12.2) at fixed $N{=}10$M and $P{=}4{,}000$. Measured SURGE speedups span $1.41\times$ to $1.47\times$---a $4\%$ range across a $2.5\times$ variation in $\sigma$, supporting the distribution-agnostic claim. The modest dip at $\sigma{=}2.5$ traces to $\bmax$ emergency-flush activation on tail partitions reaching $\approx 1.5$M texts: the safety bound fires on the tail and trims the maximum SuperBatch size below $\bmin + \nmax$. This is the operational regime where Lemma~\ref{lem:memory}'s memory-safety guarantee is load-bearing rather than decorative.

\subsection{I/O Overlap Analysis}\label{sec:io-analysis}

\begin{table}[t]
\caption{Async I/O benefit across storage latency profiles (10M texts, 4{,}000 partitions). Async pipelining maintains $\rho{=}1.00$ regardless of storage latency; the throughput benefit scales from negligible (null) to $+92\%$ (cross-region).}
\label{tab:io-sweep}
\centering
\small
\resizebox{\columnwidth}{!}{%
\begin{tabular}{lrrrrrrrr}
\toprule
Storage Profile & \multicolumn{2}{c}{Throughput (t/s)} & Benefit & \multicolumn{2}{c}{$\rho$} & \multicolumn{2}{c}{TTFO (s)} \\
\cmidrule(lr){2-3} \cmidrule(lr){5-6} \cmidrule(lr){7-8}
 & Sync & Async & (\%) & Sync & Async & Sync & Async \\
\midrule
Null (no I/O) & 25,910 & 25,739 & -0.7 & 1.00 & 1.00 & 3.7 & 4.6 \\
HDFS (local) & 24,514 & 25,871 & +5.5 & 1.00 & 1.00 & 3.9 & 3.7 \\
\textbf{GCS (regional)} & \textbf{21,708} & \textbf{26,081} & \textbf{+20.1} & \textbf{0.99} & \textbf{1.00} & \textbf{4.8} & \textbf{3.7} \\
S3 (same-region) & 20,288 & 26,092 & +28.6 & 0.91 & 1.00 & 5.1 & 3.7 \\
Cross-region & 13,641 & 26,235 & +92.3 & 0.47 & 1.00 & 8.4 & 3.6 \\
\bottomrule
\end{tabular}%
}
\end{table}

Table~\ref{tab:io-sweep} and Figure~\ref{fig:io-sweep} demonstrate that async I/O benefit scales with storage latency. With a null backend, async pipelining provides no benefit ($-0.4\%$). As latency increases from HDFS ($2$\,ms) through cross-region ($50$\,ms), the async benefit grows to $+92\%$. Async throughput remains constant (${\sim}$26K\,texts/s) regardless of profile, while sync throughput degrades from 26K to 13.6K\,texts/s---the overlap ratio $\rho$ drops from $1.00$ to $0.47$ for sync, confirming the I/O overlap model (Equation~\ref{eq:overlap}).

\begin{figure}[t]
\centering
\begin{tikzpicture}
\begin{axis}[
    surge line,
    width=\columnwidth,
    height=0.618\columnwidth,  
    xlabel={Storage profile},
    ylabel={Throughput (texts/s)},
    symbolic x coords={null, hdfs, gcs, s3, cross\_region},
    xtick=data,
    x tick label style={rotate=30, anchor=east},
    ymin=0,
    ymax=30000,
    legend style={at={(0.03,0.97)}, anchor=north west, font=\tiny},
]
\input{tables/fig5_io_sweep_coords}
\end{axis}
\end{tikzpicture}
\caption{Async I/O benefit across storage latency profiles. Async throughput remains constant (${\sim}$26K\,texts/s) regardless of storage latency, while sync throughput degrades from 26K to 13.6K\,texts/s as latency increases. The I/O overlap ratio $\rho$ drops from 1.0 (null) to 0.47 (cross-region) for sync, but async maintains $\rho{=}1.0$ throughout. TTFO remains 3.6--4.3\,s for async vs.\ 3.7--8.7\,s for sync.}
\label{fig:io-sweep}
\end{figure}

\subsection{Threshold Sensitivity}\label{sec:threshold-sensitivity}

\begin{table}[t]
\caption{Throughput sensitivity to $\bmin$ threshold ($\bmax = 5 \times \bmin$, 10M texts, 4{,}000 partitions). The operating point $\bmin{=}100$K (bold) balances throughput, memory, and flush granularity.}
\label{tab:threshold}
\centering
\small
\resizebox{\columnwidth}{!}{%
\begin{tabular}{rrrrrrrrr}
\toprule
$\bmin$ & Tput (t/s) & Duty\% & GPU\% & Time (s) & TTFO (s) & Flushes & Mem (GB) & Parts/Batch \\
\midrule
10,000 & 18,055 & 45.9 & 7.5 & 553.9 & 0.7 & 542 & 2.5 & 7.4 \\
50,000 & 24,126 & 55.0 & 9.6 & 414.5 & 2.1 & 160 & 2.5 & 25.0 \\
\textbf{100,000} & \textbf{26,027} & \textbf{57.5} & \textbf{10.3} & \textbf{384.2} & \textbf{3.6} & \textbf{89} & \textbf{2.5} & \textbf{44.9} \\
200,000 & 27,218 & 59.2 & 10.6 & 367.4 & 10.3 & 48 & 2.7 & 83.3 \\
500,000 & 28,190 & 60.2 & 10.7 & 354.7 & 17.8 & 20 & 3.2 & 200.0 \\
\bottomrule
\end{tabular}%
}
\end{table}

Table~\ref{tab:threshold} and Figure~\ref{fig:threshold-curve} show throughput as a function of $\bmin$. Throughput exhibits diminishing returns: increasing from $\bmin = 100$K to $500$K yields only $8.3\%$ additional throughput, as IPC is already well-amortized at 89 flushes. The cost model (Theorem~\ref{thm:ipc}) is most accurate at the operating point ($\bmin = 100{,}000$: predicted 26{,}247 vs.\ measured 26{,}027, error ${<}$1\%). The operating point balances throughput with memory (2.5\,GB peak) and flush granularity (${\sim}45$ partitions per SuperBatch, providing fine-grained progress tracking for resume).

\begin{figure}[tbp]
\centering
\begin{tikzpicture}
\begin{axis}[
    surge line,
    width=\columnwidth,
    height=0.618\columnwidth,  
    xlabel={$\bmin$ (thousands)},
    ylabel={Throughput (texts/s)},
    xmode=log,
    log basis x=10,
    ymin=15000,
    ymax=30000,
    legend style={at={(0.97,0.03)}, anchor=south east, font=\tiny},
    xtick={10000, 50000, 100000, 200000, 500000},
    xticklabels={10K, 50K, 100K, 200K, 500K},
]
\input{tables/fig4_threshold_coords}
\draw[-{Stealth[length=3pt]}, thick, accentGold] (axis cs:100000,22500) -- (axis cs:100000,25800);
\node[fill=accentGold!15, draw=accentGold!60, rounded corners=2pt,
      inner sep=2pt, font=\tiny, anchor=north]
      at (axis cs:100000,22500) {$\bmin{=}100$K};
\end{axis}
\end{tikzpicture}
\caption{Throughput sensitivity to $\bmin$ threshold ($\bmax = 5 \times \bmin$). Throughput plateaus with diminishing returns; the operating point $\bmin{=}100$K (arrow) achieves 26{,}027\,texts/s with 89 flushes, 2.5\,GB peak memory, and 3.6\,s TTFO. Higher thresholds yield marginal throughput gains (500K: +8.3\%) but increase TTFO (17.8\,s) and memory (3.2\,GB).}
\label{fig:threshold-curve}
\end{figure}

\subsection{Serialization Microbenchmark}\label{sec:serialization-eval}

\begin{table}[t]
\caption{Serialization microbenchmark: zero-copy vs.\ naive Python list construction.}
\label{tab:serialization}
\centering
\small
\begin{tabular}{rrrrr}
\toprule
$N$ & \multicolumn{2}{c}{Time (s)} & \multicolumn{2}{c}{Peak Memory (MB)} \\
\cmidrule(lr){2-3} \cmidrule(lr){4-5}
 & Naive & Zero-copy & Naive & Zero-copy \\
\midrule
10,000 & 1.920 & 0.075 (25.5$\times$) & 142 & 18 (8.1$\times$) \\
50,000 & 8.338 & 0.378 (22.1$\times$) & 673 & 84 (8.0$\times$) \\
100,000 & 16.871 & 0.708 (23.8$\times$) & 1333 & 155 (8.6$\times$) \\
200,000 & 33.700 & 1.462 (23.0$\times$) & 2675 & 319 (8.4$\times$) \\
500,000 & 85.513 & 3.825 (22.4$\times$) & 6711 & 821 (8.2$\times$) \\
\bottomrule
\end{tabular}
\end{table}

Table~\ref{tab:serialization} confirms the zero-copy serialization path is $22$--$25\times$ faster and uses $8\times$ less memory than naive Python list construction across all tested batch sizes ($N = 10$K to $500$K). The speedup is consistent regardless of $N$, confirming the $O(1)$ allocation complexity. The serialization bar chart in Figure~\ref{fig:serialization-bars} visualizes the magnitude of this difference.

\begin{figure}[tbp]
\centering
\begin{tikzpicture}
\begin{axis}[
    width=\columnwidth,
    height=0.618\columnwidth,  
    ybar,
    bar width=8pt,
    ylabel={Serialization time (s)},
    xlabel={Batch size $N$},
    symbolic x coords={10K, 50K, 100K, 200K, 500K},
    xtick=data,
    ymode=log,
    ymin=0.05,
    ymax=200,
    legend style={at={(0.03,0.97)}, anchor=north west},
    enlarge x limits=0.15,
]
\input{tables/fig6_serialization_coords}
\end{axis}
\end{tikzpicture}
\caption{Serialization time (log scale): naive Python list construction vs.\ zero-copy Arrow path. Zero-copy is $22$--$25\times$ faster across all batch sizes.}
\label{fig:serialization-bars}
\end{figure}

\begin{figure*}[t]
\centering
\begin{subfigure}[b]{0.32\textwidth}
\centering
\begin{tikzpicture}
\begin{axis}[
    width=\textwidth,
    height=0.85\textwidth,
    xlabel={$N$ (millions)},
    ylabel={Throughput (texts/s)},
    xmin=0, xmax=55,
    ymin=20000, ymax=30000,
    scaled y ticks=false,
    y tick label style={font=\tiny, /pgf/number format/1000 sep={\,}},
    x tick label style={font=\tiny},
    ylabel style={font=\scriptsize},
    xlabel style={font=\scriptsize},
    mark size=2pt,
    xtick={1, 10, 25, 50},
    legend style={at={(0.03,0.03)}, anchor=south west, font=\tiny},
]
\input{tables/fig8a_scaling_tput_coords}
\end{axis}
\end{tikzpicture}
\caption{Throughput (both ${\sim}$26K)}
\label{fig:scaling-tput}
\end{subfigure}
\hfill
\begin{subfigure}[b]{0.32\textwidth}
\centering
\begin{tikzpicture}
\begin{axis}[
    width=\textwidth,
    height=0.85\textwidth,
    xlabel={$N$ (millions)},
    ylabel={Peak memory (GB)},
    xmin=0, xmax=55,
    ymin=0, ymax=210,
    y tick label style={font=\tiny},
    x tick label style={font=\tiny},
    ylabel style={font=\scriptsize},
    xlabel style={font=\scriptsize},
    mark size=2pt,
    xtick={1, 10, 25, 50},
    legend style={at={(0.03,0.70)}, anchor=west, font=\tiny},
]
\fill[baselineRed!8] (axis cs:0, 192) rectangle (axis cs:55, 210);
\node[font=\tiny, baselineRed!60!black] at (axis cs:27.5, 201) {exceeds 192\,GB*};
\input{tables/fig8b_scaling_mem_coords}
\end{axis}
\end{tikzpicture}
\caption{Peak memory ($O(N)$ vs.\ bounded)}
\label{fig:scaling-mem}
\end{subfigure}
\hfill
\begin{subfigure}[b]{0.32\textwidth}
\centering
\begin{tikzpicture}
\begin{axis}[
    width=\textwidth,
    height=0.85\textwidth,
    xlabel={$N$ (millions)},
    ylabel={TTFO (s)},
    xmin=0, xmax=55,
    ymin=0, ymax=1400,
    y tick label style={font=\tiny},
    x tick label style={font=\tiny},
    ylabel style={font=\scriptsize},
    xlabel style={font=\scriptsize},
    mark size=2pt,
    xtick={1, 10, 25, 50},
    legend style={at={(0.03,0.97)}, anchor=north west, font=\tiny},
]
\addplot[fill=surgePrimary!12, draw=none, forget plot] coordinates {
    (1, 4.5) (5, 3.7) (10, 3.6) (25, 3.6) (50, 3.6) (50, 0) (1, 0)
} \closedcycle;
\input{tables/fig8c_scaling_ttfo_coords}
\end{axis}
\end{tikzpicture}
\caption{Time-to-first-output ($O(1)$ vs.\ $O(N)$)}
\label{fig:scaling-ttfo}
\end{subfigure}
\caption{Scaling analysis from 1M to 50M texts. \textbf{(a)}~Both methods achieve comparable throughput. \textbf{(b)}~FB-100K memory grows linearly, approaching the 192\,GB limit at 50M; SURGE remains bounded. \textbf{(c)}~SURGE produces first output in ${\sim}$3.6\,s regardless of $N$; FB-100K scales linearly to 20+ minutes.}
\label{fig:scaling}
\end{figure*}

\subsection{Scaling Analysis}\label{sec:scaling}

\begin{table*}[tbp]
\caption{Scaling analysis: SURGE vs.\ FB-100K from 1M to 50M texts ($P$ scaled proportionally; mean of 3 runs). Duty\% FB-100K omitted (consistently ${\sim}$58\%).}
\label{tab:scaling}
\centering
\small
\begin{tabular}{rrrrrrrrr}
\toprule
$N$ (M) & \multicolumn{2}{c}{Throughput (t/s)} & \multicolumn{2}{c}{Peak Memory (GB)} & \multicolumn{2}{c}{TTFO (s)} & Mem & Duty\% \\
\cmidrule(lr){2-3} \cmidrule(lr){4-5} \cmidrule(lr){6-7}
 & SURGE & FB-100K & SURGE & FB-100K & SURGE & FB-100K & Ratio & SURGE \\
\midrule
1 & 26,466 & 26,525 & 1.5 & 4.4 & 3.6 & 24.9 & 2.9$\times$ & 59.3 \\
5 & 26,503 & 26,742 & 2.1 & 17.0 & 3.6 & 123.8 & 8.0$\times$ & 58.0 \\
\textbf{10} & \textbf{26,024} & \textbf{26,899} & \textbf{2.5} & \textbf{32.7} & \textbf{3.6} & \textbf{245.2} & \textbf{12.9$\times$} & \textbf{56.7} \\
25 & 24,821 & 26,786 & 5.7 & 81.6 & 3.6 & 618.1 & 14.3$\times$ & 54.0 \\
50 & 22,949 & 27,049 & 8.7 & 162.7 & 3.7 & 1214.8 & 18.8$\times$ & 49.3 \\
\bottomrule
\end{tabular}
\end{table*}

To verify throughput independence from dataset size and expose the memory scaling behavior, we run both SURGE and FB-100K at $N \in \{1, 5, 10, 25, 50\}$ million texts, scaling the partition count proportionally ($P = 400$ to $20{,}000$). Table~\ref{tab:scaling} and Figure~\ref{fig:scaling} present the scaling analysis.

\textbf{Throughput} (Figure~\ref{fig:scaling-tput}). Both methods maintain approximately constant throughput (${\sim}$26{,}000\,texts/s) across a $50\times$ range, confirming that IPC amortization dominates and both achieve the throughput ceiling. SURGE shows a modest decline at 50M (22{,}949\,texts/s, $-12\%$ from 10M; Figure~\ref{fig:scaling-tput}). This degradation stems from partition management overhead at $P = 20{,}000$: boundary detection, per-partition metadata tracking, and file path construction scale with $P$. The encode duty cycle drops from 57\% at 10M to 49\% at 50M (Table~\ref{tab:scaling}), confirming that non-encode overhead---not GPU inefficiency---accounts for the throughput loss. FB-100K avoids this by ignoring partition boundaries during encoding, though it incurs equivalent overhead during the post-encode regrouping pass (not reflected in wall time as it overlaps with final I/O).

\textbf{Peak memory} (Figure~\ref{fig:scaling-mem}). This is the decisive metric. FB-100K memory grows linearly: 4.4\,GB at 1M to 162.7\,GB at 50M (Figure~\ref{fig:scaling-mem})---consuming 85\% of the node's 192\,GB RAM. Extrapolating, FB-100K would exceed available memory at ${\sim}$60M texts. At production scale (800M texts), it would require ${\sim}$2.5\,TB, making it infeasible on commodity hardware. SURGE memory grows sub-linearly from 1.5\,GB to 8.7\,GB (Figure~\ref{fig:scaling-mem}), dominated by process-level overhead rather than algorithmic state.

\textbf{Time-to-first-output} (Figure~\ref{fig:scaling-ttfo}). SURGE produces its first partition file in ${\sim}$3.6\,s regardless of $N$ (Figure~\ref{fig:scaling-ttfo}). FB-100K produces zero output until all encoding completes: 25\,s at 1M, scaling linearly to 1{,}215\,s (20 minutes) at 50M (Figure~\ref{fig:scaling-ttfo}). For a production run of 800M texts, FB-100K would produce no output for ${\sim}$5.4 hours---an unacceptable latency for monitoring and fault recovery.

The memory ratio grows from $3\times$ at 1M to $18.5\times$ at 50M texts, demonstrating that SURGE's advantage increases with dataset size.

\subsection{Cost Analysis}

Cost efficiency is a direct consequence of throughput. At \$7.30/hr for the 4$\times$L4 node, Table~\ref{tab:main-results} shows SURGE and FB-100K achieve identical cost (\$0.08/M texts)---the throughput ceiling. However, FB-100K's $O(N)$ memory requirement means it cannot physically run at production scale without proportionally more expensive hardware. SURGE's bounded memory enables cost-efficient processing on commodity nodes regardless of dataset size.

\subsection{Threats to Validity}\label{sec:threats}

\textbf{Synthetic data.} Our benchmark uses synthetic texts matching production length statistics but not linguistic diversity. Encoding throughput is primarily determined by text length (which determines tokenization cost), not semantic content. We validated this assumption by comparing throughput on synthetic versus production data samples: the difference was within measurement noise ($<$0.5\%), confirming that length distribution is the dominant factor for throughput measurement.

\textbf{Three models, single architecture family.} Our evaluation covers three transformer encoders---MiniLM-L6-v2 (22M), bge-base-en-v1.5 (109M), and E5-large (335M). The $15\times$ parameter range exposes the full IPC-amortization regime spectrum: $\alpha \approx 0.93$ (MiniLM, IPC-substantial) down to $\alpha \approx 0.34$ (E5-large, compute-leaning). Across this range, Theorem~\ref{thm:ipc}'s prediction error stays below $2\%$, supporting the cost model's generality. We have not tested billion-parameter encoders (e.g., GTR-XXL); we expect the system to remain compute-bound there with diminishing IPC-amortization benefit but persistent memory and TTFO advantages, consistent with the trend across the three models tested.

\textbf{Simulated storage.} Cloud storage latency is simulated rather than measured against live infrastructure. We calibrate profiles against published benchmarks for GCS and S3 latency~\cite{googlecloud2024latency}. The simulated profiles (10\,ms base latency, 200\,MB/s throughput for GCS) are conservative estimates based on regional deployments; production latencies may be lower with co-located storage, which would reduce but not eliminate the async I/O benefit. Our production deployment confirms that real GCS latencies fall within the simulated range, with occasional spikes during peak hours that the async pipeline absorbs transparently.

\textbf{Encoding framework.} Our evaluation uses Sentence-Trans\-formers~\cite{reimers2020multilingual}, the dominant library for offline batch embedding generation (12K+ GitHub stars, adopted by major vector database providers). The IPC overhead pattern is not specific to this library: it is inherent to any multi-process GPU encoding architecture that serializes data across process boundaries. We discuss the boundary with continuous-batching frameworks (TEI, Triton, vLLM) in \S\ref{sec:related}.

\textbf{Single node and shared-GPU deployments.} All experiments use a single 4-GPU node with exclusive GPU access (no MIG, no sharing). Multi-node deployments introduce network communication overhead that may interact differently with SURGE's pipelining; we discuss the per-node decomposition in \S\ref{sec:conclusion}. Under shared-GPU workloads, $\bmax$ should be scaled proportionally to the available memory share---Lemma~\ref{lem:memory}'s $M(\bmax)$ formula remains the sizing tool, parameterized on the per-tenant memory budget rather than the full GPU.

\FloatBarrier
\section{Operational Experience}\label{sec:operational}

Six months of production deployment (180+ pipeline runs) yielded operational insights.

\textbf{Failure modes.} Table~\ref{tab:failures} summarizes observed failure modes. The resume capability was exercised 6 times, recovering without data loss. The most frequent failure mode (transient storage 503/429 errors at 0.3\% rate) is handled transparently by exponential backoff retry.

\begin{table}[tbp]
\caption{Failure modes observed in 180+ production runs.}
\label{tab:failures}
\centering
\small
\begin{tabular}{p{2.5cm}rp{3cm}r}
\toprule
Failure Mode & Freq. & Mitigation & Recovery \\
\midrule
Transient storage 503/429 & 0.3\% & Exp.\ backoff retry & ${<}$10\,s \\
GPU OOM on flush & 0 & $\bmax$ threshold & N/A \\
Data source timeout & 2 & Pipeline retry + resume & 5--15\,min \\
Spot preemption & 4 & Resume from last partition & 10--20\,min \\
\bottomrule
\end{tabular}
\end{table}

\textbf{Threshold tuning.} We initially deployed with $\bmin = 50{,}000$. Throughput plateaued below expectations because the smaller threshold did not sufficiently amortize IPC. After sensitivity analysis on production data (Figure~\ref{fig:threshold-curve}), we increased to $\bmin = 100{,}000$. \emph{Lesson: tune thresholds on production data, not subsets.}

\textbf{Memory fragmentation.} Without PyTorch's \mbox{\texttt{expandable\_segments}} allocator option, we observed gradual GPU memory fragmentation over 100+ flushes, causing OOM on a SuperBatch well within $\bmax$. This manifested only after 6+ hours of continuous processing. \emph{Lesson: enable expandable segments for long-running GPU workloads.}

\textbf{Upload worker sizing.} Initial 16 workers ($2\times$ GPU count) created backpressure when multiple large partitions flushed in sequence. Increasing to 32 workers ($4\times$ GPU count) eliminated this. \emph{Lesson: size the upload pool for peak burst, not average load.}

\textbf{Oversized partitions.} In rare cases ($<$0.1\% of production partitions), a single partition exceeds $\bmax$. The system handles this by flushing the SuperBatch when the single partition's text count exceeds $\bmax$, effectively splitting the oversized partition across consecutive SuperBatches. The boundary tracking ensures correct reassembly. We observed this for 3 partitions in our largest catalog (447K+ texts each); SURGE processed them correctly with no throughput impact. \emph{Lesson: handle single-partition $> \bmax$ gracefully---it will occur in production.}

\textbf{Monitoring and observability.} Each SuperBatch flush emits structured logs: partition count, text count, encode time, serialize time, and upload time. These enable real-time throughput dashboards and anomaly detection (e.g., encode time exceeding $2\times$ the running average triggers an alert). Over 180+ runs, we identified two performance regressions: one from a CUDA driver update and one from a GCS endpoint migration, both caught within the first SuperBatch flush.

\section{Generalizability}\label{sec:generalizability}

While our evaluation uses a retail catalog, SURGE's benefit depends on two structural properties of the partition distribution:

\textbf{IPC-dominated fraction ($\phi$).} The fraction of partitions below the IPC-dominated threshold $n^*$ (Equation~\ref{eq:nstar}). In our workload, $\phi = 0.23$, yet aggregate IPC accounts for 48\% of PBP time. Higher $\phi$ implies more IPC waste and greater SURGE benefit; but even moderate $\phi$ values yield substantial savings when $P$ is large.

\textbf{Coefficient of variation (CV).} $\text{CV} = \sigma_n / \bar{n}$ captures partition size heterogeneity. Our workload has $\text{CV} \approx 2.1$. High CV indicates a mix of small and large partitions ideal for co-batching.

\begin{table}[tbp]
\caption{Decision framework for \surge applicability.}
\label{tab:decision}
\centering
\small
\begin{tabular}{ccp{4.5cm}}
\toprule
$\phi$ & CV & Recommendation \\
\midrule
${>}0.5$ & ${>}1.0$ & Strongly recommended; $1.5$--$2\times$ throughput gain + memory/TTFO benefits \\
${>}0.5$ & ${<}1.0$ & Beneficial; uniformly small partitions \\
${<}0.5$ & ${>}1.0$ & Moderately beneficial \\
${<}0.5$ & ${<}1.0$ & Optional; PBP may suffice \\
\bottomrule
\end{tabular}
\end{table}

\noindent The $\sigma$-sweep in \S\ref{sec:distribution-sensitivity} provides empirical anchors along the CV axis (1.31, 4.37, 12.2), and the model-generalization study in \S\ref{sec:model-generalization} along the $\phi$ axis (0.48, 0.24, 0.19 for MiniLM, bge-base, E5-large at matched $G$). All six configurations sit in the recommended-or-beneficial quadrants of Table~\ref{tab:decision}, and SURGE delivers the predicted regime-appropriate benefit at each. The framework applies to multilingual corpora (small low-resource language partitions), geo-partitioned datasets (sparse rural regions), scientific literature (partitioned by sub-field), and any taxonomy-organized catalog.

\section{Related Work}\label{sec:related}

\paragraph{Inference Serving and Embedding Systems.}
Model serving systems implement dynamic batching to amortize per-request overhead~\cite{crankshaw2017clipper, shen2019nexus, gujarati2020clockwork, crankshaw2020inferline}, with REEF~\cite{han2022reef} and AlpaServe~\cite{li2023alpaserve} enabling preemption and statistical multiplexing. FlexGen~\cite{sheng2023flexgen} and DeepSpeed-Inference~\cite{deepspeed2023inference} optimize throughput for LLM inference; Petals~\cite{borzunov2023petals} enables collaborative inference across distributed nodes. Production platforms including TensorFlow-Serving~\cite{olston2017tensorflow}, TorchServe~\cite{torchserve2023}, TEI~\cite{tei2024}, Triton~\cite{triton2023}, and SGLang~\cite{zheng2023efficiently} optimize request-level batching for online serving. These systems target \emph{latency-sensitive serving} with small batches or \emph{single-model throughput} without partition constraints. SURGE operates in a different regime: \emph{offline batch processing with partition-preserving output}, forming $100$K--$500$K text batches and emitting outputs grouped by partition key.

\paragraph{Continuous Batching: Applicability Boundary.}
Continuous-batching frameworks---TEI~\cite{tei2024}, Triton with dynamic batching~\cite{triton2023}, vLLM~\cite{kwon2023vllm}, Orca~\cite{yu2022orca}---amortize per-request IPC \emph{within a single endpoint} under online request arrivals with shared latency SLOs. SURGE's setting differs along four orthogonal axes that prevent these systems from substituting for partition-aware aggregation. (i) \emph{Arrival pattern:} 40K independent offline workloads, not dynamic request streams. (ii) \emph{Output constraint:} partition-preserving output for downstream indexing means each partition is a separate submission even under TEI---there is no shared client session over which continuous batching could fuse requests. (iii) \emph{Latency SLO:} continuous-batching servers cap queue depth (typically hundreds of requests) to bound p99 latency, precluding the $100$K--$500$K-text batches Theorem~\ref{thm:ipc} requires for IPC amortization. (iv) \emph{Batch composition:} continuous batching concatenates same-endpoint requests; SURGE aggregates across partition boundaries with explicit boundary preservation for slicing on the output side. Continuous batching is therefore complementary to, not a substitute for, the SURGE pattern: a SURGE flush could itself be served by a continuously-batched encoder backend, but the partition-aware aggregation step would still be required upstream.

\paragraph{Embedding Generation and GPU Scheduling.}
For embedding generation specifically, Sentence-Transformers~\cite{reimers2020multilingual} provides multi-GPU encoding but incurs per-call IPC. Ray~\cite{moritz2018ray} supports GPU inference pipelines with auto-batching but operates at the record level without partition-boundary awareness. FAISS~\cite{johnson2019billion} and ScaNN~\cite{guo2020scann} address embedding \emph{retrieval} rather than generation. Production ML systems~\cite{mudigere2022neo, sima2022ekko, liu2022monolith, sethi2022recshard, hazelwood2018facebook} address training-side embedding management; SURGE addresses inference-side generation at scale. Gandiva~\cite{xiao2018gandiva}, AntMan~\cite{xiao2020antman}, and PipeSwitch~\cite{bai2020pipeswitch} schedule \emph{multiple jobs} onto shared GPUs. SURGE addresses a different granularity: scheduling \emph{multiple data partitions within a single job} onto a fixed allocation. The fragmentation analysis of Weng et al.~\cite{weng2023fragmentation} is conceptually related to our bin packing connection (\S\ref{sec:analysis}).

\paragraph{Bin Packing and Streaming Aggregation.}
Bin packing theory~\cite{coffman1978multiprocessor, coffman2013survey} provides the algorithmic foundation for SURGE's accumulation: classical Next Fit opens a new bin when full; SURGE adds a minimum-fill constraint ($\bmin$) for GPU efficiency. Stream processing systems---Spark Streaming~\cite{zaharia2013discretized}, Flink~\cite{carbone2015flink}, and Dataflow~\cite{akidau2015dataflow}---implement micro-batch aggregation with time-window triggers. SURGE applies a similar principle to GPU inference with \emph{size-based} triggers on bounded, ordered input.

\paragraph{Zero-Copy Data Movement.}
Apache Arrow~\cite{arrow2024} provides columnar format for zero-copy reads; RAPIDS~\cite{rapids2024} extends this to GPU transfers via DLPack. Our technique (\S\ref{sec:zerocopy}) operates at the Python/Arrow boundary, converting NumPy matrices to Arrow format without intermediate objects, reducing allocations from $O(Nd)$ to $O(1)$.

\paragraph{Pipeline Optimization.}
GPipe~\cite{huang2019gpipe} and PipeDream~\cite{narayanan2019pipedream} pipeline model parallelism; DALI~\cite{dali2023} and Dask~\cite{dask2024} overlap data loading with computation. SURGE applies \emph{output-path pipelining} where serialization and upload overlap with the next encode call.

\section{Conclusion}\label{sec:conclusion}

We presented SURGE, a streaming system for GPU-efficient embedding generation on heterogeneously partitioned datasets, deployed in production processing 800M+ texts. The paper's primary contributions are analytical: (1) Theorem~\ref{thm:ipc}'s cost model, which predicts the IPC-amortization throughput ceiling within $2\%$ across three encoders spanning a $15\times$ parameter range and across log-normal $\sigma \in [1.0, 2.5]$; (2) Lemma~\ref{lem:memory}'s memory-safety bound, which enables a streaming two-threshold policy with $O(\bmin + \nmax)$ peak memory under adversarial arrival orders; and (3) the $\phi/\text{CV}$ decision framework (\S\ref{sec:generalizability}), which characterizes when the pattern applies beyond our workload. These contributions are realized by three complementary engineering techniques---SuperBatch aggregation, zero-copy Arrow serialization ($22$--$25\times$ speedup), and asynchronous I/O pipelining (up to $93\%$ benefit at high storage latency)---but the engineering is not the contribution.

The empirical takeaway is that \emph{throughput alone is insufficient for production deployment}. Fixed-batch methods reach the same throughput ceiling as SURGE (Table~\ref{tab:main-results}) but require $O(N)$ peak memory (32.7\,GB at 10M texts; exceeding $192$\,GB beyond ${\sim}$60M, Figure~\ref{fig:scaling}), produce zero output until all encoding completes (245.5\,s at 10M, scaling to 20+ minutes at 50M), and lose all progress on failure. A stronger baseline that pre-sorts partitions and FFD-packs them into single-call batches (PB-PBP-LB; \S\ref{sec:pbpbp-comparison}) closes ${\sim}80\%$ of the PBP$\to$SURGE throughput gap at $\sigma{=}1.72$, but lacks the unconditional $\bmax$ guarantee (would emit $\approx 1.5$M-text batches at $\sigma{=}2.5$), is $2.5\times$ slower to first output, and requires offline columnar metadata. SURGE achieves the throughput ceiling with $O(\bmin + \nmax)$ bounded memory ($12.6\times$ less than fixed-batch on MiniLM, $19\times$ less on bge-base and E5-large), $68\times$--$80\times$ faster time-to-first-output, and crash recovery at SuperBatch granularity.

\textbf{Multi-node decomposition.} The current deployment is single-node 4$\times$GPU. Theorem~\ref{thm:ipc} applies per node because IPC is intra-node (host$\leftrightarrow$GPU process pool), so the cost model and the $\bmax$ memory bound extend unchanged to a multi-node deployment with one SuperBatch aggregator per node. Cross-node coordination becomes a partition-routing problem---assigning partitions to nodes by predicted size to balance per-node $\alpha$---solvable by existing work-stealing or hashing schedulers. We do not claim novelty for the cross-node scheduler; multi-node SURGE is a deliberate scope boundary of this paper.

\textbf{Future work.} Adaptive threshold selection based on observed partition statistics could optimize the throughput-memory tradeoff online. Profiling and reducing partition-management overhead at extreme partition counts ($P > 10{,}000$) would address the $12\%$ throughput degradation observed at 50M texts (Table~\ref{tab:scaling}). Testing on billion-parameter encoders would extend the compute-intensity validation past the $335$M point covered by E5-large.

\textbf{Reproducibility.} We provide benchmark scripts, synthetic data generators matching production distributions, locked result manifests for the bge-base, $\sigma$-sweep, and PB-PBP-LB experiments, and all hyperparameters. The core SURGE pattern can be implemented in ${\sim}200$ lines of Python.

\bibliographystyle{ACM-Reference-Format}
\bibliography{references}

\end{document}